\documentclass[sigconf, nonacm]{acmart}
\usepackage[ruled,vlined,linesnumbered]{algorithm2e}
\usepackage{subcaption}
\usepackage{amssymb}
\usepackage{multirow} 
\usepackage{multicol}
\usepackage{adjustbox}
\usepackage{enumitem}
\usepackage{pifont} 

\begin{document}
\title{Low-Latency Out-of-Core ANN Search in High-Dimensional Space}

\author{Ziwen Song}
\affiliation{%
  \institution{Northeastern University}
  \streetaddress{195 Chuangxinlu}
  \city{Shenyang}
  \state{Liaoning}
  \postcode{110169}
}
\email{songzw1@mails.neu.edu.cn}

\author{Bin Wang}
\affiliation{%
  \institution{Northeastern University}
  \streetaddress{195 Chuangxinlu}
  \city{Shenyang}
  \state{Liaoning}
  \postcode{110169}
}
\email{binwang@mail.neu.edu.cn}

\author{Xiaochun Yang}
\affiliation{%
  \institution{Northeastern University}
  \streetaddress{195 Chuangxinlu}
  \city{Shenyang}
  \state{Liaoning}
  \postcode{110169}
}
\email{yangxc@mail.neu.edu.cn}

\author{Junhua Zhang}
\affiliation{%
  \institution{Northeastern University}
  \streetaddress{195 Chuangxinlu}
  \city{Shenyang}
  \state{Liaoning}
  \postcode{110169}
}
\email{zhangjunhua@neu.edu.cn}

\begin{abstract}
In-memory graph-based approximate nearest neighbor (ANN) search has superior search performance but incurs significant memory footprint. Disk-based methods reduce memory usage but suffer from high disk access latency. A common challenge is how to achieve low-latency search while significantly reducing memory footprint. In this paper, we propose SkipDisk, a disk-memory hybrid ANN search that significantly reduces memory footprint while achieving search latency comparable to or lower than in-memory method HNSW. By analyzing existing disk-based methods, we observed that disk access remains the primary bottleneck, and existing lower bound based filtering methods are two loose to effectively reduce disk access. Therefore, we design SkipDisk to achieve tight lower bound with low memory footprint to reduce the search latency. First, we design a dedicated pivot for each point to improve the lower bound of the triangle inequality for effective filtering. We further design an estimation-based approach based on this lower bound. Second, to reduce the memory footprint, we employ a three-level data pruning strategy to preserve informative data in memory. Third, to further reduce search latency, we design an asynchronous I/O strategy based on the decoupling of in-memory search and disk access by storing neighbor nodes in memory. Experiments show that our method achieves a latency of 85\% of HNSW's latency with approximately 10\% memory footprint, and a latency to 63\% of HNSW's with a slightly higher memory footprint of around 20\%.
\end{abstract}

\maketitle

\section{Introduction}
\begin{figure}[t!]
\begin{subfigure}[t]{.22\textwidth}
  \centering
  \includegraphics[width=\textwidth]{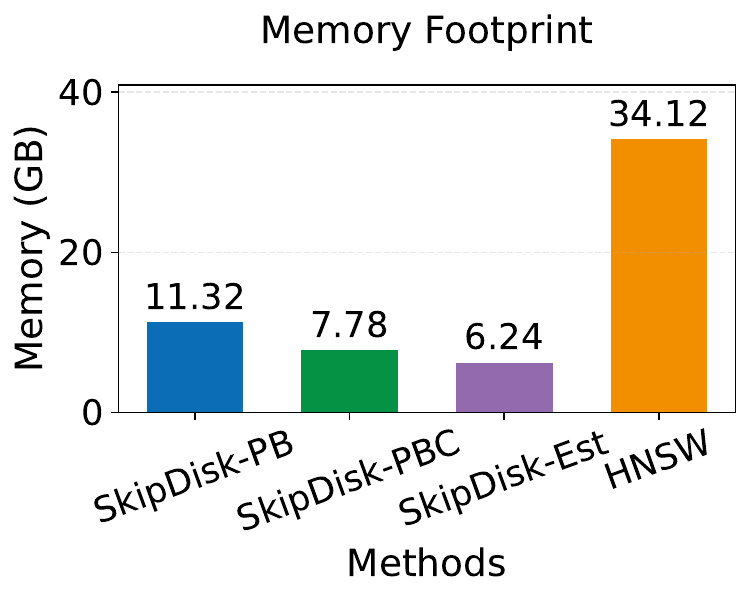}  
  \caption{Memory footprint}
  \label{fig:intrmemory}
\end{subfigure}
\begin{subfigure}[t]{.22\textwidth}
  \centering
  \includegraphics[width=\textwidth]{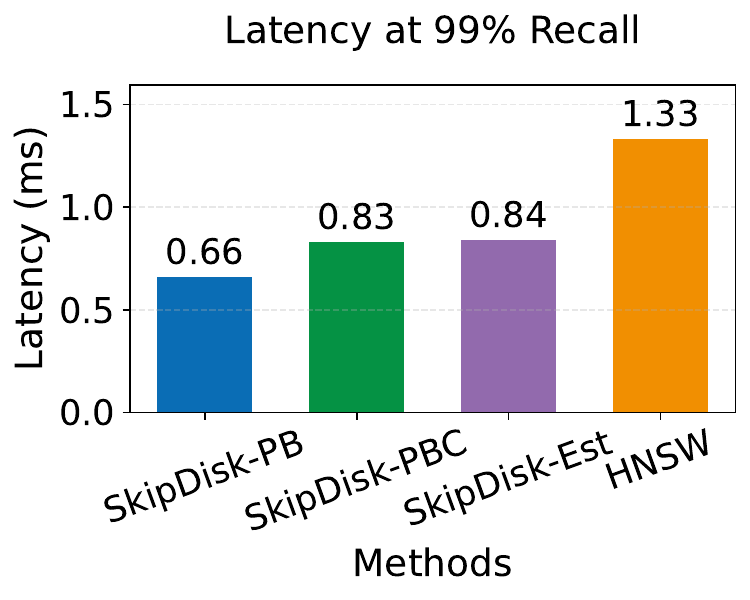} 
  
  \caption{Search latency}
  \label{fig:intrlatency}

\end{subfigure}

\caption{Illustration of memory footprint and search Latency on BIGCODE dataset of our method compared to HNSW. Our three variants of SkipDisk are able to achieve lower latency than HNSW with reduced memory footprint. SkipDisk-PB can achieve a latency of 50\% of HNSW's latency with around 30\% memory footprint.}

\label{fig:intrmemlat}
\Description{Two figure. Left is for memroy footprint. Right is latency.}

\end{figure}

Approximate nearest neighbor (ANN) search on high dimensional space~\cite{indyk1998approximate} is a fundamental problem and plays an important role in the era of artificial intelligence. It powers a wide range of applications, from recommendation systems~\cite{DBLP:journals/www/WuZQWGSQZZLXC24,DBLP:conf/kdd/LiLJLYZWM21} and image retrieval~\cite{DBLP:conf/sigmod/WangYGJXLWGLXYY21} to how large language models retrieve knowledge~\cite{DBLP:conf/kdd/FanDNWLYCL24}. Consequently, developing high-performance and memory efficient ANN search methods has become a critical research area and a fundamental research challenge.

In-menory graph-based ANN search have long dominated the field due to their superior search performance. Among those, the Hierarchical Navigable Small World (HNSW) method~\cite{DBLP:journals/pvldb/LuKXI21} has been widely adopted in vector databases. A key limitation of in-memory methods is the requirement to load the entire data into memory, which poses a significant challenge for large-scale datasets. With the rising price of memory, reducing memory footprint to save costs has become increasingly important. Meanwhile, maintaining high search performance is equally critical. This presents a dilemma: how to maintain low latency search while significantly reducing memory footprint? Even halving memory footprint would yield significant cost savings. This leads us to a central problem we address in this work: \textbf{Is it possible to achieve search latency comparable to, or even surpassing, that of in-memory method HNSW while significantly reducing memory footprint?}

Researchers have proposed disk-based methods, such as DiskANN to reduce the memory footprint of graph-based ANN search~\cite{jayaram2019diskann}. DiskANN stores neighbor information and full-precision data on disk and keeps quantized data in memory. During search, it retrieves neighbor information and original vectors from disk to update and re-rank candidates. Starling~\cite{DBLP:journals/pacmmod/WangXYWPKGXGX24} optimizes the data layout by colocating frequently accessed points in the same disk block and uses asynchronous I/O access on it to improve the performance. PipeANN~\cite{DBLP:journals/corr/abs-2509-25487} use a poll-based asynchronous I/O to fully exploit disk performance and overlap in-memory search with disk. Trim~\cite{DBLP:journals/pacmmod/SongZGYWWQ25} reduce disk access number by using a lower bound based on triangle inequality. Gorgeous~\cite{yin2025gorgeous} reduces disk access by only considering top-ranked candidate points for exact distance computations based on distance ordering in the quantized data. AlayaLaser~\cite{DBLP:journals/corr/abs-2602-23342} uses a SIMD-friendly data layout and a large volume of in-memory data to improve performance.

We conduct an analysis and observe that the high cost of disk access remains the main bottleneck in Disk-based methods. With vector dimensions in embeddings often being 768 or higher, a single 4K disk block can store at most one vector, making the data layout optimization strategy of Starling ineffective. With the large number of disk accesses, PipeANN struggles to fully hide disk access latency. The lower bound computed by Trim is relatively loose, resulting in limited filtering effectiveness. Its p-Relaxed variant filters more points by relaxing the bound, but increasing the degree of relaxation leads to a higher error rate, which degrades recall and makes it difficult to achieve good latency. Gorgeous relies on the precision of the quantized methods and simply limiting the selection to the top-ranked portion may lead to performance degradation due to precision loss, thereby constraining further improvements in performance. AlayaLaser relies on a large volume of in-memory data to achieve good performance, but its performance can be significantly impacted under memory constraints. In general, existing methods struggle to reduce disk access costs and achieve low-latency search with a lower memory footprint. We will give a detailed analysis in Section 3.

\textbf{Challenge.} We face the challenge of optimizing disk access cost to achieve low-latency search with reduced memory footprint. (1) Latency is primarily determined by the number of disk accesses. Effectively reducing the number of disk accesses is key to lowering search latency, which requires us to reduce disk I/O without erroneously filtering out valid data points. (2) Reducing the memory footprint is another critical factor. We need to achieve effective filtering with as little memory as possible. Simply shifting the burden by loading more data from the disk to memory to reduce disk access is not viable. (3) In order to achieve search latency comparable to or lower than HNSW, reducing the disk access cost is not sufficient. We also need to optimize in-memory process performance.

To address these challenges, we propose SkipDisk, a disk-memory hybrid ANN search for low-latency search. The disk storage holds the complete dataset, whereas the memory storage contains data such as neighbor information and other relevant data for candidate generation and filtering. SkipDisk identifies candidate points in memory. Then, it filters these candidates to reduce the number of disk accesses required for the final distance computation. In addition, it employs asynchronous processing to overlap in-memory search and disk access. Figure ~\ref{fig:intrmemlat} illustrates our performance and memory usage compared to HNSW on the BIGCODE dataset. SkipDisk achieves lower search latency than HNSW with a significantly reduced memory footprint. Specifically, we design SkipDisk from the following three perspectives.

\textbf{Reducing Disk Reads via In-Memory Filtering.} Our core strategy is to design effective filtering criteria using in-memory data to reduce disk accesses. Disk access is required to retrieve the full vector data of candidate points for the exact distance computation, but most retrieved points do not end up in the final result set. We use triangle inequality to compute lower bounds to filter candidate points. But in high-dimensional spaces, these bounds can be loose, leading to poor filtering. To address this, we maintain a separate pivot for each point, using low-precision values by preserving only a few high-significant bits of mantissa part of float for each dimension. This allows us to compute a tighter lower bound for each candidate point. As the cost of computing the distance between two points is much smaller than retrieving a data point from disk, we can afford the additional computational overhead of computing the distance between the query and a unique pivot point for each candidate to obtain a tight lower bound for filtering.We also design an estimation-based approach based on this lower bound to filter points. This allows us to filter out points when the lower bound falls below a threshold, and save memory footprint.

\textbf{Lowering Memory Footprint via Data Pruning.} The filtering capability is closely related to our goal of reducing memory footprint. Our core strategy involves a three-level approach to prune data in memory, thereby keeping the memory footprint low while still providing effective filtering to reduce disk accesses. The three layers are dimension-level, bit-level, and point-level. The first two levels aim to provide a good lower bound while minimizing memory usage. The third level allows us to incur a slight increase in disk accesses in exchange for further reducing memory usage. \textbf{(a)} At Dimension-level, we reduce memory by storing only the subset of dimensions that are most informative for distance computation. We identify and retain those that are crucial for deriving a tight lower bound or filtering metric, thus maintaining filtering power with far less data in memory. \textbf{(b)} At Bit-level, we apply the triangle inequality to construct a lower bound. Building on the dimension-level approach, we further reduce memory usage by retaining only the high significant bits of each dimension to construct pivot points. This allows us to further reduce memory usage while still providing a useful lower bound for filtering. \textbf{(c)} At Point-level, We allow some points to reside outside of memory and be accessed directly from disk, which can further reduce memory usage at the cost of some additional disk accesses. By carefully selecting which points to filter and which to access directly, we strike a balance that minimizes overall query latency while keeping memory footprint low.

\textbf{Overlap of I/O and Computation.} To further optimize search performance, we employ an asynchronous approach to overlap in-memory search and disk access. Inspired by PipeANN, we use asynchronous I/O and a poll-based I/O pipeline to fully leverage disk performance. We store neighbor nodes in memory, allowing the search for candidate points in memory to be decoupled from disk access. Whenever there are available slots in the I/O queue, we continuously issue disk access requests for candidate points that have passed the filtering stage. We also continuously retrieve data from completed disk reads and update the result set. Since we significantly reduce the number of disk accesses through effective in-memory filtering, I/O can be effectively hidden within the in-memory computation process. To further improve the in-memory graph traversal performance, we apply quantization to the PCA-reduced vectors used for in-memory search. This allows us to generate candidate points with a much lower distance computation cost while also reducing memory footprint. 

In summary, our contributions are as follows:
\begin{itemize}[leftmargin=20pt]
    \item We conducted an analysis of current methods, identifying disk access as the primary performance bottleneck, and limitation of existing filtering-based approach in effectively reducing disk accesses.
    \item We improve the lower bound of the triangle inequality by assigning a separate pivot for each point. We also design an estimation-based approach based on this lower bound for effective filtering. These filtering methods are combined with a three-level data pruning strategy to reduce memory footprint and disk access.
    \item We decouple in-memory search and disk access by storing neighbor nodes in memory, and design an asynchronous I/O strategy to overlap in-memory search and disk access to reduce search latency.
    \item SkipDisk provides different variants to meet various memory and latency requirements. Extensive experiments on multiple datasets show that SkipDisk significantly reduces memory usage while achieving latency comparable to or lower than HNSW.
\end{itemize}

\section{Preliminaries}
In this section, we introduce the preliminaries of approximate nearest neighbor (ANN) search, including the problem definition, basic concepts, and related theory to our proposed method.
\subsection{Problem Definition}
\begin{definition}[K Nearest Neighbor Search (KNN Search)]
  Given a dataset \( P \) consisting of \( n \) data points in a \( D \)-dimensional space, and a query point \( q \), the goal of ANN search is to find the top-\( K \) nearest neighbors of \( q \) in \( P \) based on a specified distance function \(\delta(p,q)\). The \(KNN\) search aims to return a set of points \( S \subseteq P \) such that \(|S| = K\) and for each point \( s \in S \), the distance \( \delta(s, q) \) is among the smallest distances from \( q \) to any point in \( P\setminus S \).
\end{definition}

Approximate nearest neighbor search allows for some approximation in the results to achieve faster query times, especially in high-dimensional space where exact search can be computationally expensive. We use recall as the primary metric to evaluate the accuracy of ANN search results, defined as the ratio of correctly retrieved nearest neighbors to the total number of true nearest neighbors. In practice, achieving high recall with low latency is the key objective of ANN search algorithms.

In this work, the distance we focus on is the Euclidean distance, defined as: \(\delta(x, y) = \sqrt{\sum_{i=1}^{D} (x_i - y_i)^2}\), where \(x\) and \(y\) are two points in the \(D\)-dimensional space. We use \(\delta^2(x, y)\) to denote the square Euclidean distance. We use \(LB(x,y)\) to denote the lower bound of two points.

In this paper, we focus on optimizing the latency of a query, which is the time required to retrieve search results for a given query point. Specifically, we address approximate nearest neighbor (ANN) search in a disk-memory hybrid setting. Our goal is to reduce memory footprint while achieving latency that matches or exceeds that of in-memory methods. We concentrate on float-type vector data, where each dimension is represented by 32 bits using the IEEE 754 standard. Float values are widely used in current AI applications for embedding representations, reflecting the practical scenarios encountered in real-world applications.

We use the term "filtering" to refer to the process of using in-memory data to reduce the need for accessing data on disk. For example, we can use an index or precomputed results in memory to filter out non-promising points, thereby reducing disk I/O operations. On the other hand, "pruning" refers to the strategy of deciding which part of data reside in memory and which is stored on disk to optimize memory utilization and performance. For instance, we may choose to keep frequently accessed data or critical dimensions in memory while storing less frequently accessed data or non-critical dimensions on disk.

\subsection{Search In DiskANN}
DiskANN is a classic disk-based ANN search method that serves as the foundation for subsequent disk-based ANN approaches. It reduces memory footprint by storing neighbor information and full-precision data on disk while keeping quantized data in memory. During the search, for each point selected from the search set, DiskANN fetches the point's neighbor information and original vector from disk. It then updates the search set using the quantized data in memory and recomputes distances with the full-precision data to re-rank the results. The algorithm~\ref{alg:diskann} summarizes this process. Although DiskANN reduces memory usage, high disk access costs lead to increased search latency. Subsequent optimizations, such as PipeANN, have achieved better performance.
\begin{algorithm}
  \caption{Search in DiskANN}
  \label{alg:diskann}
  \KwIn{Query \(q\), graph in disk \(\mathcal{G}\), quantized data \(\mathcal{Q}\), search parameters \(K\) and \(L\), search set \(C\), result set \(R\)}
  \KwOut{Top-\(K\) nearest neighbors of \(q\)}
  Initialize \(C\) with entry point\;
  \While{\(C\) has not been fully explored}{
    \(p \gets\) pop the top point from \(C\)\;
    get \(p\)'s neighbors from disk and update \(C\) using \(\mathcal{Q}\)\;
    get \(p\)'s original vector from disk and compute the distance to \(q\) to update \(R\)\;
  }
  \Return \(R\)\;
\end{algorithm}

\section{Analysis}
In this section, we give a detailed analysis on existing disk-based ANN methods. Our analysis has two parts. First, we show that disk access remains the dominant bottleneck, limiting the performance of existing disk-based methods. Second, we analyze the filtering power of Trim, showing that the trade-off between filtering power and accuracy leads to limited performance improvement.

\subsection{Disk Access Analysis}
Both DiskANN and PipeANN consist of two primary stages: (1) in-memory graph traversal and candidate expansion, and (2) disk reads for neighbor lists and candidate verification with full-precision vectors. In DiskANN, the search process alternates between in-memory processing and disk access, waiting for each disk access to complete before proceeding to the next in-memory processing step. In contrast, PipeANN uses asynchronous I/O to mask the latency of disk accesses. It overlaps in-memory processing with disk I/O, with a poll base IO pipeline to reduce the overall query latency.

We simplify the search process latency as:
\(
T_{\text{query}} = T_{\text{mem}} + T_{\text{io}} - T_{\text{overlap}},
\)
where \(T_{\text{mem}}\) is the in-memory processing time, \(T_{\text{io}}\) is the SSD access time, and \(T_{\text{overlap}}\) is the overlap due to asynchronous execution. For DiskANN, \(T_{\text{overlap}}\) is zero, so the query time is the sum of in-memory processing and disk access times. For PipeANN, \(T_{\text{overlap}} > 0\), but if disk access time is long, in-memory processing cannot fully mask the I/O latency, leading to high query latency.

\begin{figure}[ht]

    \centering
    
    \includegraphics[width=8.5cm,keepaspectratio]{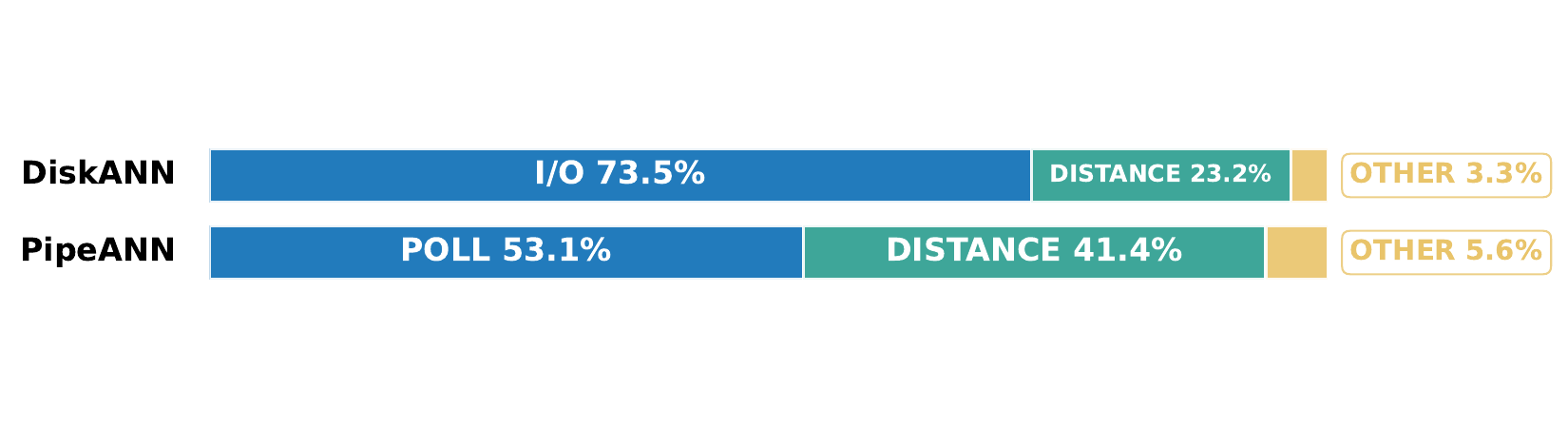}
  
    \caption{Time breakdown of DiskANN and PipeANN at Recall@99\% K=20 in MSMARCO dataset of 1536 dimensions. }
  
    \Description{}
    \label{fig:diskaccessanalysis}
\end{figure}

Figure ~\ref{fig:diskaccessanalysis} shows the query latency breakdown for both methods on an AWS EC2 i7i.4xlarge instance with instance's NVMe SSD. For DiskANN, disk access time dominates the total query time, indicating that disk I/O is the primary bottleneck. For PipeANN, the high polling time (waiting for disk access completion) indicates that in-memory processing only partially masks the I/O latency, with many disk I/O operations ongoing.

AlayaLaser~\cite{DBLP:journals/corr/abs-2602-23342} quantizes a vector every two sub-dimensions in the analysis, which results in a higher arithmetic intensity value that exceeds the \(I_{\text{max}}\) threshold in that paper, leading to a compute-bound conclusion. In practice, more sub-vectors are used, which results in a lower arithmetic intensity value that falls below the \(I_{\text{max}}\) threshold, leading to I/O bound conclusion. In the implementations of PipeANN and DiskANN, the maximum number of chunks is 256 and 512, respectively. In Trim, the default quantization is set to four dimensions per chunk. This configuration is sufficient to achieve good query performance while maintaining lower memory usage. In addition, in AlayaLaser's experiments, the comparison method used quantization encoding for each dimension, resulting in higher memory usage and high computational cost, and allowing AlayaLaser to load more data and achieve latency advantages. \textbf{Therefore, disk access remains the primary performance bottleneck in practice.} 

\subsection{Filtering Power Analysis}
Trim uses PQ (production quantization) codebooks as pivots to compute lower bounds based on the triangle inequality. It provides two modes: strict lower-bound mode and \(p\)-relaxed lower-bound mode, both of which aim to filter out points to reduce disk access. Gorgeous~\cite{yin2025gorgeous} introduces a parameter \(\sigma\) to limit the checking of the exact distances to only the top \(D_{r}\) points. By analyzing these methods, we demonstrate that \textbf{current filtering strategies face a significant trade-off between filtering out enough points to reduce disk access to achieve low latency and maintaining recall.}
\begin{figure}[ht]
    \centering
    \includegraphics[width=8.5cm,keepaspectratio]{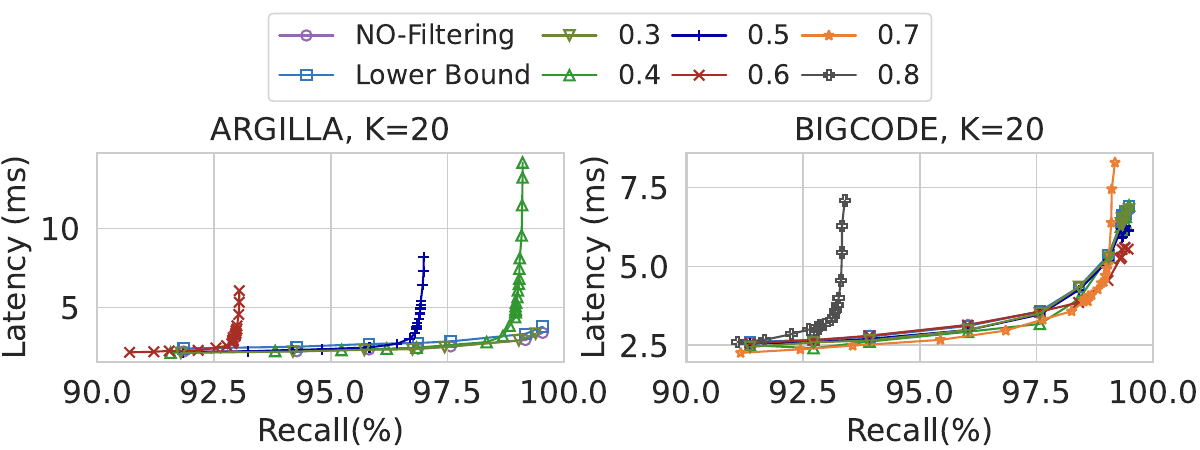}

    \caption{Latency of Trim on different parameters.}

    \label{fig:trimanalysis}
    \Description{}
\end{figure}

\begin{figure}[ht]
    \centering
    \includegraphics[width=7.5cm,keepaspectratio]{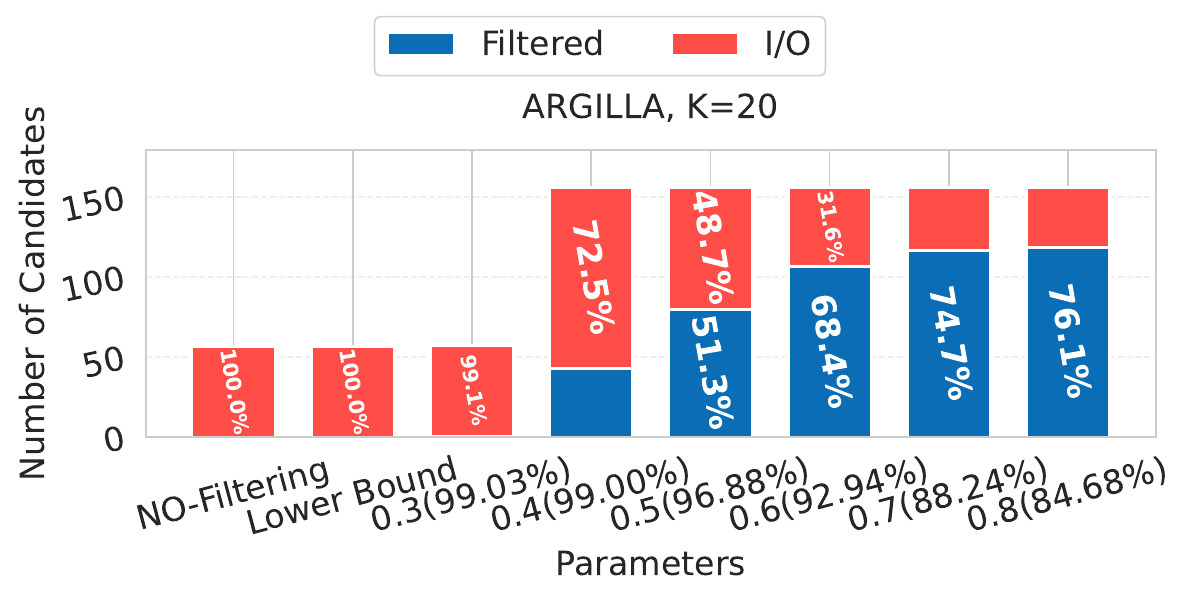}

    \caption{I/O of Trim on different parameters.}

    \label{fig:trimioanalysis}
    \Description{}
\end{figure}

\textbf{Strict lower-bound mode.}
Trim's strict lower-bound mode is safe but often weak. As shown in Figure~\ref{fig:trimanalysis}, enabling lower-bound filtering does not result in a significant performance improvement compared to the no-filtering case. The reason, as illustrated in Figure~\ref{fig:trimioanalysis}, is that very little data is actually filtered out. Consequently, the I/O bottleneck remains. Let \(\tau\) be the threshold for filtering, \(\delta(q, p)\) be the true distance between query \(q\) and candidate point \(p\), and \(\text{LB}(q, p)\) be the computed lower bound. We define the ratio \(r_{\text{ratio}} = \frac{\text{LB}(q, p)}{\delta(q, p)}\). To filter out point \(p\), we need \(r_{\text{ratio}} \cdot \delta(q, p) \geq \tau\). We observe that \(r_{\text{ratio}}\) is often around 0.6 on average, which is consistent with experimental results in Trim's paper. Due to the curse of dimensionality, many points have similar distances to the query. Moreover, the generated candidate points are often close to the query point, and their distances tend to be similar. This means that the lower bounds computed in this ratio often fall below the threshold \(\tau\), leading to many points that are actually far from the query not being filtered out. 

\textbf{\(p\)-relaxed lower-bound mode.}
To increase the filtering rate, Trim introduces a relaxed variant with a parameter \(\gamma\). This approach improves the quantity of filtering but also introduces false filtering, where relevant points are mistakenly discarded, directly reducing recall. As shown in Figure~\ref{fig:trimanalysis}, at lower \(\gamma\) values, latency does not decrease due to the limited number of filtered points. At higher \(\gamma\) values, performance does not improve because many relevant points are falsely filtered out. To achieve the same level of recall, more points need to be scanned to compensate for the relevant points that were incorrectly discarded, which means that there is no significant performance gain. This observation is also supported by Figure~\ref{fig:trimioanalysis}. At lower \(\gamma\) values, the proportion of filtered points is small. At higher \(\gamma\) values, although more points are filtered, the number of candidate points also increases significantly, which is consistent with the previous description.

\begin{figure}[ht]
\begin{subfigure}[t]{.22\textwidth}
  \centering
  \includegraphics[width=\textwidth]{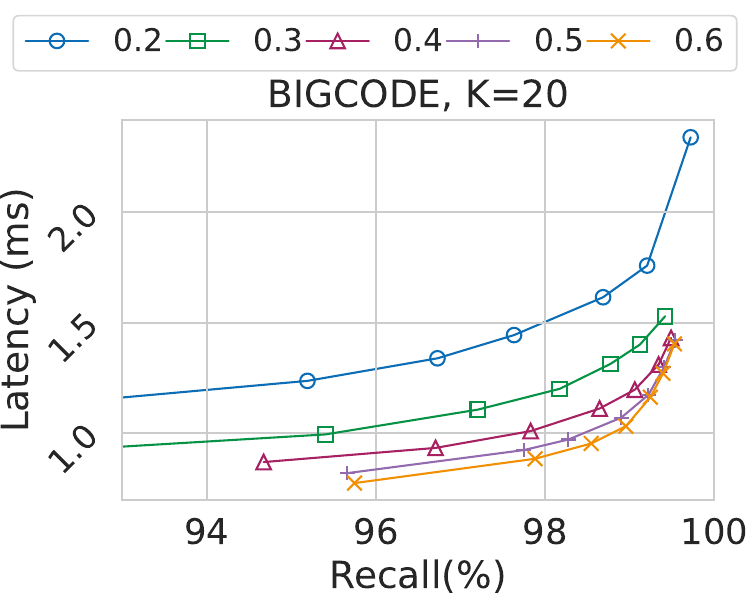}  
  \caption{Latency on different \(\sigma\)}
  \label{fig:latgorgsigma}
\end{subfigure}
\begin{subfigure}[t]{.22\textwidth}
  \centering
  \includegraphics[width=\textwidth]{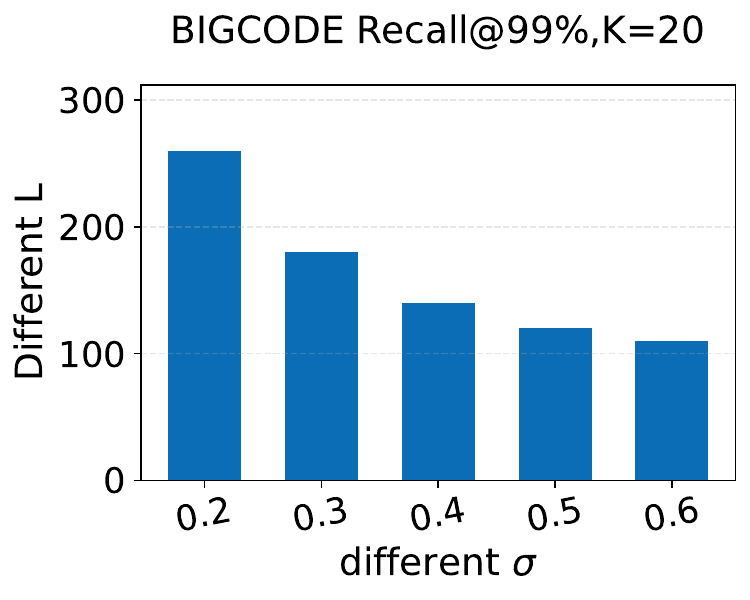} 
  
  \caption{L on different \(\sigma\)}
  \label{fig:lgorgsigma}
\end{subfigure}

\caption{Gorgeous analysis on different \(\sigma\).}

\label{fig:gorgeousanalysis}
\Description{}

\end{figure}

\textbf{Gorgeous.} Gorgeous~\cite{yin2025gorgeous} uses a parameter \(\sigma\) to control the number of candidate points that are checked after quantization-based search. Instead of examining all candidate points, it only checks the top \(D_{r}\) points determined by \(\sigma\). Reducing \(\sigma\) to minimize disk access can lead to discarding many points that would otherwise meet the result requirements. Figure~\ref{fig:latgorgsigma} illustrates the performance under different \(\sigma\) values. As \(\sigma\) decreases, latency gradually increases because more points need to be checked to compensate for the loss in precision, leading to a decline in overall performance. Figure~\ref{fig:lgorgsigma} further demonstrates this trade-off by showing that a larger \(L\) (checking more points during the search) is required to maintain a 99\% recall rate, highlighting the need to balance between reducing disk access and maintaining query accuracy.

\section{Basic Design}
In this section, we present the basic design of our proposed SSD-Memory search method, SkipDisk. We first provide an overview of the overall framework, and then delve into the details of our design. Build upon this foundation, we will introduce further optimization strategies in next section.
\subsection{Overview of SkipDisk}
SkipDisk is a disk-memory hybrid ANN search method that reduce end-to-end query latency under constrained memory. The core idea of SkipDisk is to leverage in-memory data to make informed decisions about whether a candidate data point is worth accessing on disk. If a point can be filtered using in-memory data, the disk access for that point can be skipped. SkipDisk follows a memory-for-I/O trade-off: a moderate increase in memory footprint is exchanged for a larger reduction in disk accesses and overall latency.

\begin{figure}[ht]
    \centering
    \includegraphics[width=8.5cm,keepaspectratio]{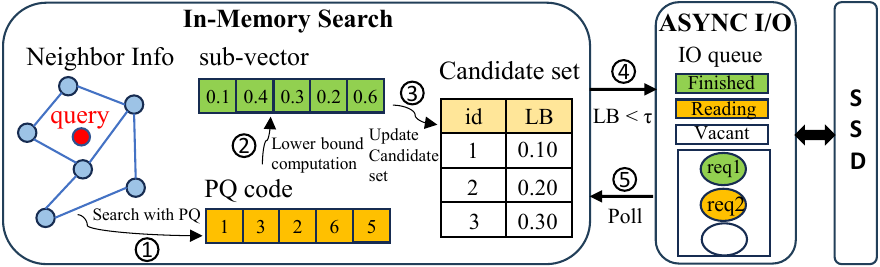}
   
    \caption{Workflow of SkipDisk. We follow these steps: \ding{172} Generate candidates using in-memory neighbor nodes and PQ data, \ding{173} Compute lower bounds using subspace vectors, \ding{174} Update the candidate set based on the lower bounds, \ding{175} Send I/O requests for points with lower bounds above the threshold, \ding{176} Retrieve the requested data asynchronously. }
    \label{fig:overallworkflow}
    \Description{}
\end{figure}

The overall framework of SkipDisk consists of three main components: (1) an in-memory search strategy, (2) a disk access filtering mechanism, and (3) an asynchronous I/O and computation overlap pipeline. Figure~\ref{fig:overallworkflow} illustrates the overall workflow of SkipDisk, showing how these three components interact during the search process. The in-memory search strategy is designed to efficiently identify promising candidate points based on the data stored in memory. The disk access filtering mechanism utilizes tight lower bounds and estimation techniques to filter out non-promising points before accessing disk. Finally, the asynchronous I/O and computation overlap pipeline allows concurrent execution of disk accesses and in-memory computations, further reducing query latency. The Algorithm~\ref{Alg:Basic} details the process. By carefully designing each component to balance memory footprint and disk access, SkipDisk can achieve latency comparable to or lower than HNSW, while using significantly less memory.  

\begin{algorithm}
  \caption{Basic SkipDisk}
  \label{Alg:Basic}
  \KwIn{query \(q\), search-set \(S\), search-set size \(L\), result-set \(R\), disk-candidate set \(C\), async-io-queue \(Q\), IO queue size \(B\)}
  \KwOut{Top-K nearest neighbors of \(q\)}
  \(\tilde{q} \gets \text{PCA\_transform}(q)\) \tcp*{PCA transformation}
  Initialize \(S\) with entry points using PQ\;
  Initialize \(threshold\) to infinity\;
  \While{\(S\) has elements to check}{
    \(p \gets\) closest unchecked point from \(S\)\;
    \(p_{\text{disk}} \gets\) get data point retrieved from \(Q\)\;
    \If{\(|Q| < B\)}{
        \If{\(C\) has elements}{
          \(p_{\text{cand}} \gets\) a point from \(C\) with smallest distance to \(q\)\;
        }
      \If {LB\((\tilde{q}, p_{\text{cand}})\) < threshold}{
        commit async read for \(p\) to \(Q\) and break\;
      }
    }
    check neighbors of \(p\) in memory and update \(S\) with distance using PQ\;
    compute \(LB(q, p)\) using PCA reduced data and update \(C\) with \(p\) and LB\;
    compute dist\((q, p_{\text{disk}})\), update \(R\) and threshold\;
  }
  process remaining points in \(C\) and \(Q\) \;
  \Return \(R\)\;
\end{algorithm}

\subsection{In-Memory Search}
The object of in-memory search is to generate candidate points that are potential final results within memory through graph traversal. These candidates are further filtered to determine which points need disk access for their original vector data. The index in graph-based methods can be divided into two parts: one is the graph connectivity information, which includes the neighbor list for each point, and the other is the vector information used for distance computations for each point.

\textbf{Neighbor Information.} During graph traversal, the algorithm accesses and expands a point's neighbor list iteratively. In DiskANN, neighbor lists are stored on disk, requiring disk I/O for each expansion. SkipDisk, however, keeps the full neighbor information in memory, allowing direct access without disk I/O. This decouples in-memory search from disk access, enabling efficient candidate generation and deferring costly disk reads until necessary.

Storing neighbor lists in memory is relatively inexpensive. For example, with 10 million 768 dimension points and 64 neighbors per point, the neighbor lists would require about 2GB of memory, which is small compared to the approximately 35GB memory footprint of the full HNSW index. Additionally, separating the storage of neighboring lists and vector data can reduce disk usage. In DiskANN and PipeANN, the data is stored in pages (e.g., 4KB per page). For a 1024-dimensional vector with its neighbor list, the total size can exceed 4KB, which requires two pages. This leads to underutilization of the second page and approximately double the disk usage. 

\textbf{Vector Information.} For the vector information used in memory for computing neighbor distances, we apply Principal Component Analysis (PCA) to each data point for dimensionality reduction, and our search is conducted based on the reduced-dimensional data inspired by SkipComputing~\cite{DBLP:journals/pacmmod/SongWY25}. To further reduce memory usage, we apply product quantization to these reduced-dimensional vectors. During the search process, we conduct the search on the data that has been dimensionally reduced and quantized to generate candidate points. After reducing disk access, this becomes the bottleneck in achieving a lower latency than HNSW. Using quantization data, we can improve in-memory search performance.

The process of generating candidate points is as follows: Our entire search strategy is designed based on the DiskANN, and we maintain a priority queue for storing candidate points. As we traverse the graph in memory, whenever a node's neighbor needs to be accessed, we add that node to the priority queue as a potential candidate. This node may require disk access to retrieve its original vector data.

\subsection{Dimensional-Level Pruning}
After generating candidate points in memory, we compute the lower bound of the distance to the query point for each candidate point. If the lower bound is greater than the current threshold, we can skip disk access for that point. Otherwise, we need to access the disk to retrieve the original vector data for distance computation.

The distance based on a subset of dimensions can serve as a lower bound for the distance between two vectors~\cite{DBLP:journals/pacmmod/SongWY25}, allowing us to avoid loading all vector dimensions into memory. This is the core idea of our Dimension-Level Pruning, which reduces memory usage by storing only a reduced representation of the vectors. Our Dimension-Level Pruning design performs PCA transformation on the data and stores only the first few dimensions in memory. As illustrated in Figure~\ref{fig:dimlevelpruning}, we reserve only part of the dimensions in memory and compute the lower bound for filtering.

\begin{figure}[ht]
    \centering
    \includegraphics[width=6.0cm,keepaspectratio]{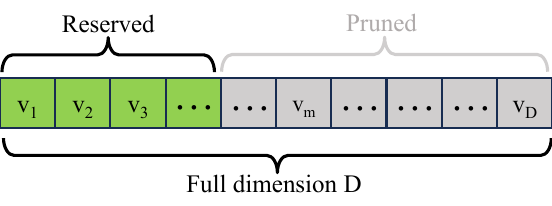}

    \caption{Dimensional-Level pruning for a vector.}

    \label{fig:dimlevelpruning}
\end{figure}
During the in-memory search process, for each generated candidate point, we compute the lower bound of the distance to the query point using the PCA subspace representation stored in memory. This lower bound serves as sorting criterion for the candidate points in the priority queue. We always check the point with the smallest lower bound distance first to determine if it requires disk access. We have the following property: during the checking process, if a point's lower bound distance exceeds the threshold, then the lower bound distances of all subsequent points will also exceed the threshold. Therefore, after all candidate points are generated, we can stop checking more points once we find the first point whose lower bound distance exceeds the threshold.

\subsection{Overlap of I/O and In-Memory Search}
Asynchronous data reading, such as in PipeANN, improves search performance by overlapping I/O with computation. We adopt a similar strategy to reduce query latency. Since we store neighbor node information in memory, we avoid SSD reads during graph traversal. Instead, we continuously generate candidate points in memory and use asynchronous I/O only when necessary points (with lower bounds below the threshold) need to be read from disk for actual distance calculations.

We maintain an async-io-queue for submitting asynchronous I/O requests. When the queue is not full and there are candidate points requiring disk access, we submit an I/O request to fetch the vector data for distance computation. Meanwhile, CPU threads continue to expand candidates in memory. We continuously check the queue for completed requests, compute distances with the retrieved data, and update the result set and threshold \(\tau\) accordingly. This enhances the filtering power for subsequent candidates. Points with lower bound distances greater than \(\tau\) skip disk access. This approach reduces disk reads and hides I/O latency behind computation, reducing overall search latency.

\subsection{Discussion}
Our basic design can effectively reduce latency, but the memory footprint is still a concern. Although we store PCA-reduced representations in memory, which significantly reduces memory usage, it can still be substantial, especially for large datasets. A key question is whether we can further reduce the memory footprint while still maintaining effective filtering performance, such as ensuring that the lower bound remains effective even with reduced memory usage. In the next section, we explore how to further reduce memory footprint while maintaining effective filtering performance to reduce disk access and achieve low latency search performance.

\section{Optimization Design}
In this section, we present optimization strategies to reduce the large memory footprint identified in the basic design of SkipDisk.

\subsection{Lower Bound with Bit-Level Pruning}
Our core object is to further reduce the memory footprint while maintaining effective filtering performance. Specifically, we aim to design a tight lower bound for filtering with reduced memory footprint. Our key idea is to perform bit-level reduction on the vector data for each dimension, which can further reduce memory usage. Based on this reduced representation, we can design a method to get a tighter lower bound for filtering using the triangle inequality. 

\begin{figure}[ht]
    \centering
    \includegraphics[width=8.5cm,keepaspectratio]{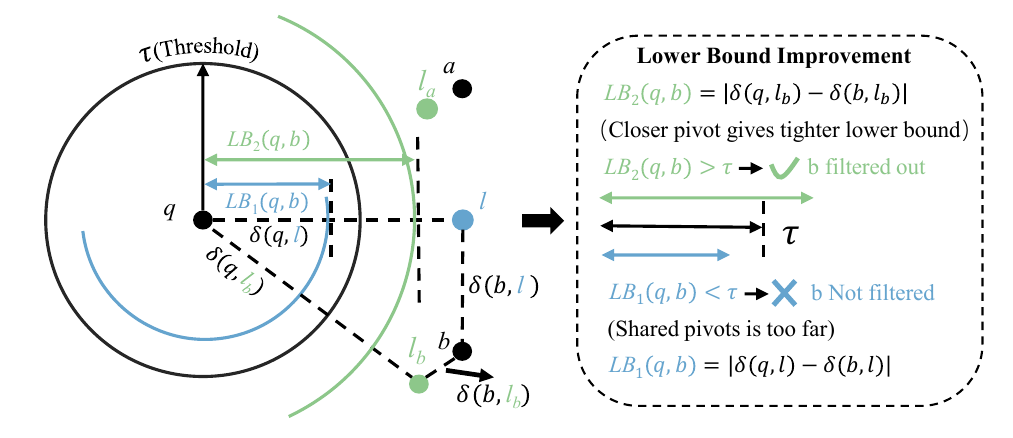}

    \caption{Lower bound. The distance of point \(b\) to \(q\) exceeds the threshold \(\tau\). The lower bound \(LB_{1}(q, b)\) calculated using the pivot point \(l\) designed for points \(a\) and \(b\) is below \(\tau\), so point \(b\) cannot be filtered out. By designing a closer pivot point \(l_b\) specifically for point \(b\), the new lower bound \(LB_{2}(q, b)\) exceeds \(\tau\), thus filtering out point \(b\).}
    \label{fig:methodlbana}
    \Description{}

\end{figure}

The triangle inequality is a widely used principle in index design~\cite{DBLP:journals/vldb/ZhuCGJ22}. For two points \(p\) and \(q\), with \(l\) as their pivot point, the lower bound can be calculated using the following formula: \( \delta(p,q) \geq |\delta(p,l) - \delta(q,l)| \). In conventional indexing methods, a classic approach is to select a single pivot point to represent a group of points. During index construction, the distances between each point in the group and the pivot point are calculated and stored. During the query phase, the distance between the query point and the pivot point \(l\) is calculated first. Then, the lower bound of the distance between the query point and each point in the group is computed using the triangle inequality. This allows us to filter out results that do not meet the requirements, while avoiding direct access to the original data, thereby improving query performance.

Due to the curse of dimensionality, it is difficult to find a pivot point that can effectively represent a group of points and provide a tight lower bound for filtering~\cite{DBLP:journals/pacmmod/SongZGYWWQ25}. The Trim method proposes using PQ-encoded points as pivots. The core idea is to cluster vectors in subspaces and assign a pivot point to each group, thereby significantly increasing the number of pivot points to provide better lower bounds. However, as analyzed in section 3.2, this approach still struggles to provide effective lower bounds for filtering in high-dimensional spaces. 
\begin{figure}[ht]

    \centering
    \includegraphics[width=8.5cm,keepaspectratio]{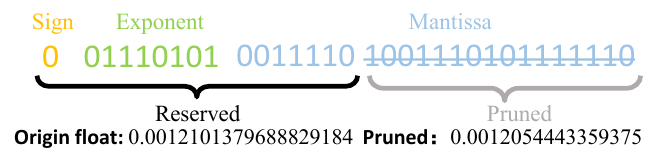}

    \caption{Bit-Level pruning for a float.}
    \label{fig:bitpruning}

\end{figure}

\textbf{Instead of finding a single pivot point to represent a group of points, we design a dedicated pivot point for each individual point.} Why does this approach help improve our performance? First, this approach aims to provide a tighter lower bound for filtering. From the triangle inequality, to obtain a tighter lower bound, we want \(\left|\delta(p,l) - \delta(q,l)\right|\) to be as close as possible to \(\delta(p,q)\). A feasible scenario is when \(\delta(q,l)\) is as large as possible and \(\delta(p,l)\) is as small as possible. If we assign a unique pivot point \(l\) to each point \(p\) such that \(\delta(p,l)\) is minimized (ideally close to zero), then \(l\) can effectively represent \(p\). Consequently, \(\delta(q,l)\) will provide a good approximation of the distance between \(q\) and \(p\), leading to a tighter lower bound. As shown in Figure ~\ref{fig:methodlbana}, we obtain a tighter lower bound by designing a closer pivot \(l_{b}\) specifically for point \(b\).

Second, the design of the lower bound is specifically for optimizing disk access. The cost of computing the distance between two points is much smaller than retrieving a data point from the disk. This characteristic provides us with a unique advantage in designing the lower bound computation method: We do not need to pay special attention to the efficiency of computing lower bound. As long as the cost of computing the lower bound is less than the cost of accessing the disk, we can achieve better performance. In this context, we are more concerned with the memory footprint of the lower bound computation method rather than its computational efficiency. We do not need to use a single pivot point to compute the lower bound for multiple points. The goal is to reduce memory usage while still providing effective filtering power. 

\textbf{How can we design this pivot point for each point?} We perform bit-level pruning on the vector data for each dimension, and the result values serve as the pivot points. Bit-level pruning means reducing the memory footprint of each floating-point number by retaining only the sign bit, exponent bits, and a subset of the most significant mantissa bits, as shown in Figure ~\ref{fig:bitpruning}. 

The embedding vectors are typically stored in 32-bit floating-point format (IEEE 754). The IEEE 754 floating-point representation~\cite{ieee754float} consists of a sign bit, exponent bits, and mantissa bits. Specifically, the 32-bit format is structured as follows: Sign bit (1 bit): Determines the sign of the number. Exponent (8 bits): Represents the exponent of the power of 2. Mantissa (23 bits): Encodes the fractional part of the number. The value of a floating-point number is calculated as:
\[ (-1)^{\text{sign}} \times 1.\text{mantissa} \times 2^{\text{(exponent - 127)}} \]
In the mantissa, each bit represents \(b_i \times 2^{-i}\), where \(b_i\) is the bit value (0 or 1) at position \(i\). The most significant bits (exponent and leading mantissa bits) capture the dominant information, while the less significant bits contribute less to the overall value. For example, MSB (bit position 0) contributes \(b_0 \times 2^{-1}\), while LSB (bit position 22) contributes \(b_{22} \times 2^{-23}\). By applying bit-level pruning to reduce the number of bits retained for each dimension, we can significantly reduce memory usage while still preserving the precision of the original data. This approach ensures that the value of \(\delta(p, l)\) becomes very small, close to 0, which in turn allows \(\delta(q, l)\) to better approximate \(\delta(p, q)\), providing a tighter lower bound.

\subsection{Lower Bound with BF16 and PCA}
In this section, we introduce the practical design of our lower bound computation in SkipDisk, which combines the BF16 format with PCA. This approach takes advantage of dimension-level pruning (via PCA) and bit-level pruning (via BF16) to reduce the memory footprint while still providing effective lower bounds for filtering. 
BF16 is a 16-bit floating-point format that retains the same exponent bits as standard 32-bit floats but reduces the mantissa to 7 bits. We chose BF16 for its balance between precision and memory usage, validated by its widespread use in deep learning. Additionally, using the BF16 format allows us to easily leverage SIMD instructions for efficient numerical computations. we store data in BF16 format. During distance calculations, we load multiple BF16 values into registers, pad them to FP32, and perform computations. Although the pivot points we used can be supported by different lengths of high significant bits, we use BF16 in the specific implementation in SkipDisk.

Using the BF16 format, we can reduce memory usage by half. However, this can still be substantial. To further reduce memory footprint, we first apply PCA for dimension-level pruning, and then apply BF16 format for bit-level pruning. We use this pruned representation as the pivot point to compute the lower bound using the triangle inequality. We have the following theorem to show that the distance computed using this approach is indeed a valid lower bound for the original distance between the two vectors:
\begin{theorem}
For a point \( p \) and a query point \( q \), let \(\tilde{p}\) be the PCA-reduced data of \( p \) and \(\tilde{q}\) be the PCA-reduced data of \( q \). Let \(\hat{p}\) be the BF16-pruned version of \(\tilde{p}\). We use \(\hat{p}\) as the pivot point for \(\tilde{p}\). \(|\delta(\tilde{q}, \hat{p}) - \delta(\tilde{p}, \hat{p})|\) is a valid lower bound for the original distance \(\delta(p,q)\). 
\end{theorem}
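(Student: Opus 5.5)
The argument chains two inequalities: the triangle inequality in the reduced space, and the fact that the PCA-reduced distance never exceeds the original distance. Throughout, PCA is taken as an orthogonal projection: the data is centered by a fixed mean \(\mu\) and multiplied by an orthonormal matrix \(W\in\mathbb{R}^{D\times D}\) whose rows are the principal directions. The reduced vector \(\tilde{x}\) keeps the first \(d\) coordinates of \(W(x-\mu)\). The same transform (same \(\mu\), same \(W\)) is applied to both \(p\) and \(q\), as in Algorithm~\ref{Alg:Basic}.

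\textbf{Step 1 (triangle inequality in the reduced space).} The points \(\tilde{q}\), \(\tilde{p}\) and \(\hat{p}\) all lie in the same \(d\)-dimensional Euclidean space. Rounding to BF16 changes coordinate values but not the dimension. Hence \(\delta(\tilde{q},\hat{p})\le\delta(\tilde{q},\tilde{p})+\delta(\tilde{p},\hat{p})\) and \(\delta(\tilde{p},\hat{p})\le\delta(\tilde{p},\tilde{q})+\delta(\tilde{q},\hat{p})\). Together these give
\[
|\delta(\tilde{q},\hat{p})-\delta(\tilde{p},\hat{p})|\le\delta(\tilde{q},\tilde{p}).
\]
This step uses nothing about how \(\hat{p}\) was constructed, so any pivot is valid. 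BF16 only affects how tight the bound is.

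\textbf{Step 2 (projection is non-expansive).} Since \(W\) is orthogonal,
\[
\delta^2(p,q)=\|W(p-\mu)-W(q-\mu)\|^2=\sum_{i=1}^{D}\bigl(W(p-q)\bigr)_i^2 .
\]
Dropping the last \(D-d\) nonnegative terms gives \(\delta^2(\tilde{p},\tilde{q})\le\delta^2(p,q)\). The mean \(\mu\) cancels in the difference. This is the subset-of-dimensions lower bound already used in the Dimension-Level Pruning subsection.

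\textbf{Step 3 (combine).} Chaining Steps 1 and 2 yields
\[
|\delta(\tilde{q},\hat{p})-\delta(\tilde{p},\hat{p})|\le\delta(\tilde{q},\tilde{p})\le\delta(p,q),
\]
which is the claim.

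The main obstacle is not technical. It is stating the PCA assumptions explicitly: the transform must be a full orthonormal rotation followed by coordinate truncation, with no whitening (scaling by eigenvalues), and the same \(\mu\) and \(W\) must be used for queries and data. Whitening would break the non-expansiveness in Step 2. A further practical point is that \(\delta(\tilde{p},\hat{p})\) must be precomputed from the full-precision \(\tilde{p}\) and stored, or bounded above. If it is stored in rounded form, the rounding has to go upward so the subtracted quantity does not underestimate, and that sign case must be checked.
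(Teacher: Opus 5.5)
Your proof is correct and follows the same route as the paper. The reverse triangle inequality in the reduced space gives $|\delta(\tilde{q},\hat{p})-\delta(\tilde{p},\hat{p})|\le\delta(\tilde{q},\tilde{p})$, and the PCA-reduced distance is bounded by the original distance, so the two inequalities chain together. The paper only asserts the second step, whereas you justify it explicitly (an orthonormal rotation followed by truncation, no whitening, and the same $\mu$ and $W$ for data and query), which is a worthwhile addition.
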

\begin{proof}
  The proof is straightforward. According to the triangle inequality, we have: \(|\delta(\tilde{q}, \hat{p}) - \delta(\tilde{p}, \hat{p})| \leq \delta(\tilde{q}, \tilde{p})\). Since \( \tilde{p} \) and \( \tilde{q} \) are PCA-reduced representations, the distance in the reduced space is a lower bound of the original distance. Formally, we have \(\delta(\tilde{q}, \tilde{p}) \leq \delta(q,p)\). Therefore, we can conclude that \(|\delta(\tilde{q}, \hat{p}) - \delta(\tilde{p}, \hat{p})| \leq \delta(q,p)\), which means that the computed value is indeed a valid lower bound for the original distance between \(p\) and \(q\).
\end{proof}

\subsection{Filtering with Estimation}
In this section, we explore the design of an estimation-based filtering method to further reduce disk access. The primary benefit of lower bound-based filtering is its correctness guarantee, ensuring that it will not erroneously filter out any points. However, this method relies on a tight lower bound, and to compute a lower bound that meets the filtering requirements, we need to retain more dimensions in memory. Can we still perform filtering when the lower bound is below the threshold? Our idea is to use the computed lower bound to estimate the distance between two points. We can compare this estimated value with a threshold for filtering. By doing so, we do not need to retain many dimensions of the data in memory. Although this introduces the risk of false filtering, it can lead to significant memory savings.

Various estimation-based DCO (Distance Comparison Operation) optimization methods have been proposed to improve the ANN search performance in memory. Among them, ADSampling~\cite{DBLP:journals/pacmmod/GaoL23} and DADE~\cite{DBLP:journals/pvldb/DengCZWZZ24} are based on computing partial distances to estimate the distance between two points. ADSampling relies on random projection, while DADE is based on PCA-rotated data. As we use PCA in our method, we design our estimation-based filtering method based on the DADE method. If we replace PCA with random projection, ADSampling can also be integrated. In DADE, the filtering condition is based on inequality \(dis' > \epsilon \cdot \tau\), where \(dis'\) is the distance computed over a subset of dimensions, \(\tau\) is the current threshold, and \(\epsilon\) is a parameter determined based on the number of sub-dimensions, their corresponding eigenvalues, and offline sampled data. In our method, we replace \(dis'\) with the lower bound computed using the triangle inequality based on PCA-reduced and BF16-pruned data. We then compare this estimated distance with \(\epsilon \cdot \tau\) to perform filtering. This method requires that the computed lower bound be sufficiently tight.  If the lower bound is significantly smaller than the actual distance in the subspace, it can lead to an underestimated value, leading to ineffective filtering. Our one-pivot-per-point strategy can provide a very tight lower bound, which allows our method to perform effective filtering.

\textbf{One step testing.} Unlike the estimation-based DCO optimization methods that rely on an incremental testing process to gradually increase the number of dimensions to reduce computational cost, our goal is to reduce disk access. We perform a single step testing based on the cached sub-dimensions in memory to perform filtering. If the point is not filtered after this testing, we directly access the disk to read the original vector data for a full distance computation and update the results accordingly. Because disk access is the most costly operation in this scenario, we want to minimize unnecessary disk accesses by using estimation-based filtering effectively, rather than performing multiple testing by requiring multiple disk accesses. And we also don't need to perform multiple testing in memory to optimize computational cost as the iterative process can lead to performance degradation~\cite{DBLP:journals/pacmmod/SongWY25}.

We have the following theorem to demonstrate that using our extended filter method does not introduce additional errors compared to the ADSampling and DADE methods. Specifically, if a point is filtered by our method, it will also be filtered by the ADSampling and DADE methods. Additionally, our method reduces the number of false negatives, meaning that points incorrectly filtered out by the ADSampling and DADE methods have a higher chance of being retained in our method in theory.

\begin{theorem}
let \(\tilde{p}\) be the PCA-reduced data of \( p \) and \(\tilde{q}\) be the PCA-reduced data of \( q \). Let \(\hat{p}\) be the BF16-pruned version of \(\tilde{p}\). Let \(LB(\tilde{q}, \hat{p}) = |\delta(\tilde{q}, \hat{p}) - \delta(\tilde{p}, \hat{p})|\) be the lower bound computed using our method, and let \(\delta(\tilde{q}, \tilde{p})\) be the distance computed in the PCA-reduced space used in the DADE method. Let \( \epsilon \) and \( \tau \) be parameters defined as in the DADE. Then:
\begin{itemize}
    \item \textbf{Correctness of Filtering.} If \( LB(\tilde{q}, \hat{p}) > \epsilon \cdot \tau \), then \( p \) will also be filtered out by the DADE method.
    \item \textbf{Reduction of False Negative.}  if \( p \) is filtered out by the DADE or ADSampling, it may not be filtered out by our method. This can be expressed as:   \( P(\delta(\tilde{q}, \tilde{p}) \geq \epsilon \cdot \tau) \geq P(LB(\tilde{q}, \hat{p}) \geq \epsilon \cdot \tau)\)
\end{itemize}
\end{theorem}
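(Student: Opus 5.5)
Both parts of the statement reduce to one pointwise inequality, already proved inside the previous theorem: the triangle inequality in the PCA-reduced space gives
\[
LB(\tilde{q}, \hat{p}) = |\delta(\tilde{q}, \hat{p}) - \delta(\tilde{p}, \hat{p})| \leq \delta(\tilde{q}, \tilde{p}).
\]
Here \(\hat{p}\) is the BF16 pivot stored in memory, and \(\delta(\tilde{p},\hat{p})\) is the precomputed pivot radius. This holds deterministically, for every query and every data point. I will set this up first, fixing the DADE test in its native form: \(p\) is rejected when \(\delta(\tilde{q},\tilde{p}) > \epsilon \cdot \tau\). Both tests use the same \(\epsilon\) and \(\tau\), and both are built from the same retained PCA dimensions.

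\textbf{Correctness of filtering.} Suppose \(LB(\tilde{q}, \hat{p}) > \epsilon \cdot \tau\). Chaining with the inequality above gives \(\delta(\tilde{q},\tilde{p}) \geq LB(\tilde{q},\hat{p}) > \epsilon\cdot\tau\). So the DADE test rejects \(p\) as well. In set form, \(\{LB > \epsilon\tau\} \subseteq \{\delta(\tilde{q},\tilde{p}) > \epsilon\tau\}\). Consequently any point our method discards, whether correctly or falsely, is also discarded by DADE, so we introduce no new errors. For ADSampling I would state the same argument with the random-projection partial distance in place of \(\delta(\tilde{q},\tilde{p})\), noting that the triangle inequality holds in any projected subspace.

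\textbf{Reduction of false negatives.} I would read the probability over the randomness of the query (and of the data or projection, in the ADSampling case). The event inclusion above then gives the monotonicity statement directly:
\[
P(LB(\tilde{q}, \hat{p}) \geq \epsilon \cdot \tau) \leq P(\delta(\tilde{q}, \tilde{p}) \geq \epsilon \cdot \tau).
\]
Restricting to true near neighbours, i.e.\ conditioning on \(\delta(q,p) \leq \tau\), the probability that such a point is wrongly discarded is no larger under our test than under DADE. Points falsely rejected by DADE may therefore survive our test, since the inclusion can be strict whenever the triangle inequality is not tight.

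The main obstacle is interpretive rather than technical. First, I must pin down the exact form of the DADE test and the probability space, so that the two events are genuinely nested with identical \(\epsilon\) and \(\tau\). Second, I must reconcile the strict inequality \(>\) in part one with the \(\geq\) in part two; this is harmless, because the chain gives both versions. I would note explicitly that the claim is only a weak inequality (``may not be filtered''). Strictness needs a positive-probability set where \(\delta(\tilde{p},\hat{p})>0\) and the triangle inequality is strict, and I would remark that generically this holds because BF16 truncation makes \(\hat{p} \neq \tilde{p}\).
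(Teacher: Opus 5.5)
Your proposal is correct and follows the paper's own argument: the paper's proof (details omitted there) rests entirely on the pointwise inequality \(LB(\tilde{q},\hat{p}) \le \delta(\tilde{q},\tilde{p})\), and your event inclusion \(\{LB(\tilde{q},\hat{p}) > \epsilon\tau\} \subseteq \{\delta(\tilde{q},\tilde{p}) > \epsilon\tau\}\) plus monotonicity of probability is exactly the omitted detail. Your side remark on strictness is slightly loose, since a strict triangle inequality is not sufficient and you would also need positive probability that \(\epsilon\tau\) falls between \(LB(\tilde{q},\hat{p})\) and \(\delta(\tilde{q},\tilde{p})\); but the statement only claims the weak inequality, so this does not affect the proof.
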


\begin{proof}
  The proof is straightforward, mainly based on the fact that \( LB(\tilde{q}, \hat{p}) \leq \delta(\tilde{q}, \tilde{p}) \). We omit the detailed proof.
\end{proof}

\subsection{Point-Level Pruning}
Point-level pruning means that we do not keep all points in memory, and allow some points to directly access the disk to obtain the original vector. As shown in ~\cite{DBLP:journals/pacmmod/SongZGYWWQ25}, during the search process, not all points have an equal chance of being accessed. Some points are accessed far more frequently than others. Therefore, to further reduce memory usage, we store the more frequently accessed points in memory. By adopting this approach, we make a trade-off between latency and memory usage, sacrificing some performance to achieve lower memory consumption. 

How do we select which points to keep in memory? Instead of designing a new strategy, we directly integrate the indegree-based strategy from AlayaLaser as our point selection strategy. We sort the points based on their indegree and select a series of points with the highest degree to be loaded into memory. The point-level pruning is built upon our dimension-level pruning and bit-level pruning, we do not need to store the full vector data in memory, which means that for the same memory budget, we can preserve more data points in memory.

\textbf{Search with Point-Level Pruning.} After obtaining candidate points through in-memory search, we need to determine which points require disk access to retrieve the original vector data for distance computation. For those points that are kept in memory, we can directly use the dimension-level pruning and bit-level pruning data to compute lower bounds for filtering. For those points that are pruned by point-level pruning, we directly set their lower bound values to 0. This means that no filtering is performed on these points, allowing them to be placed in the I/O queue for disk access.

\begin{figure*}
  \centering
  \includegraphics[width=\textwidth]{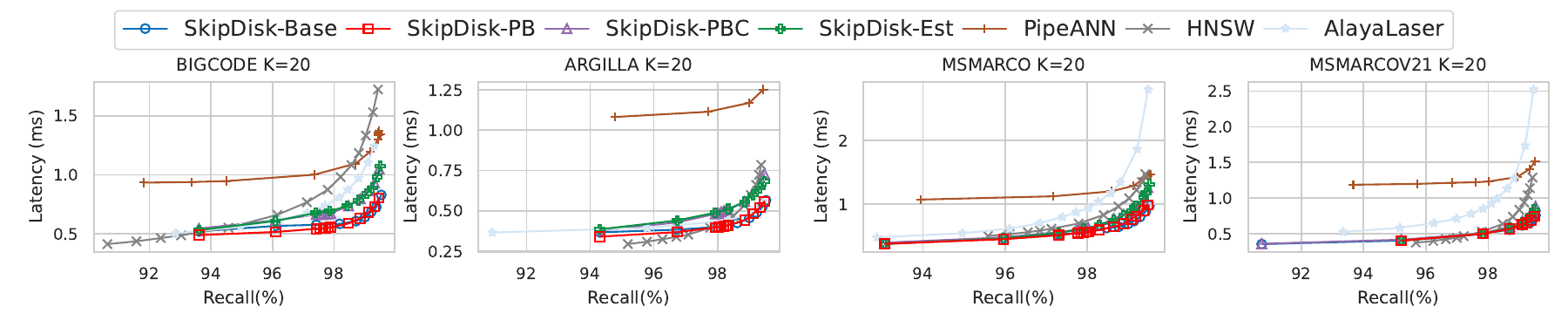}

  \caption{Overall Performance. Our method achieves lower latency compared to disk-based approaches. Compared to HNSW, our method achieves comparable or even lower latency while using significantly less memory.}

  \label{fig:overall perf}
  \Description{}
\end{figure*}

\section{Evaluation}

\textbf{Datasets.} We evaluate our method on several datasets, including ARGILLA~\cite{argilladataset}, BIGCODE~\cite{bigcodedataset}, MSMARCO~\cite{msmarcodataset}, and MSMARCOV21~\cite{msmarcov21dataset}. For MSMARCOV21, we randomly sample 50M points. These datasets are widely used in ANN search experiments and cover a range of common embedding dimensions and data scales. The datasets are available online such as on Hugging Face~\cite{huggingfacemainpage}. Table \ref{tab:datasets} provides an overview of the basic information for each dataset. We sample 10,000 data points from each dataset without replacement to form the query set and remain data points as the dataset for index.

\begin{table}
\caption{Dataset Statistics}

    \label{tab:datasets}
    \centering
    \begin{tabular}{c|c c c c}
    \hline
       Dataset  & \#Vectors & Dim & Data size & \#Queries \\
       \hline
       ARGILLA & 21,071,228 & 1024& 86.3 GB & 10000 \\
       BIGCODE & 10,404,628 & 768& 31.9 GB& 10000 \\
       MSMARCO & 8,831,823 & 1536& 54.3 GB & 10000 \\
       MSMARCOV21 & 50,000,000 & 1024& 204.8 GB & 10000 \\
    \hline
    \end{tabular}  

\end{table}

\textbf{Evaluation Metrics.} We evaluate the performance of our method using several key metrics, including latency, recall, and memory usage. Latency is measured as the time taken to return results for a query, while recall is calculated as the proportion of relevant items retrieved from the total relevant items in the dataset. We use latency-recall curves to illustrate the trade-off between these two metrics. A curve that is closer to the bottom-right corner indicates better performance.

\textbf{Our Method Variants.} We evaluate the performance of our method under different variants to demonstrate the effectiveness of each optimization strategy. The variants include:
\begin{itemize}[leftmargin=20pt]
    \item \textbf{SkipDisk-Base.} This variant implements the basic design of SkipDisk with only dimension-level pruning using PCA for lower bound computation, quantization on PCA-reduced data for in-memory search.
    \item \textbf{SkipDisk-PB.} This variant incorporates both dimension-level pruning and bit-level pruning using BF16 for lower bound computation and filtering.
    \item \textbf{SkipDisk-PBC.} This variant combines dimension-level pruning, bit-level pruning, and point-level pruning, where we load a subset of points in memory based on degree-based sampling.
    \item \textbf{SkipDisk-Est.} This variant extends the PB variant by integrating estimation-based filtering using the DADE method.
\end{itemize}

\textbf{Compared Methods.} We compare our method against several state-of-the-art ANN methods, including PipeANN, AlayaLaser and HNSW.
\begin{itemize}[leftmargin=20pt]
    \item \textbf{PipeANN}~\cite{DBLP:journals/corr/abs-2509-25487} is a method that optimizes DiskANN by overlapping asynchronous I/O with computation, significantly improving the search performance of DiskANN.
    \item \textbf{AlayaLaser}~\cite{DBLP:journals/corr/abs-2602-23342} is a method that focuses on using quantization data to design layout to optimize computation. And cache data in memory to reduce I/O.
    \item \textbf{HNSW} is a vanilla hierarchical navigable small world graph implemented in hnswlib~\cite{DBLP:journals/pvldb/LuKXI21} and widely supported by vector databases.
\end{itemize}

\textbf{Parameter Settings.} Disk-based graph is constructed based on DiskANN. We set the maximum degree of the graph to 64 and the construction list size to 200. For HNSW, we set M to 32 and efConstruction to 500. For the product quantization (PQ) encoding, we use one quantization code for every four dimensions. For AlayaLaser, we use PCA followed by RabitQ for quantization, which is the default setting. For PipeANN, we use the default beamwidth for I/O. For our method, the subspace dimension \(d_{\text{LB}}\) for lower bound computation is set to 512 for MSMARCO and 256 for other datasets. The dimension of the subspace \(d_{\text{DADE}}\) for DADE is set to 128, and the error parameter \(P_{\text{s}}\) used in DADE is set to 0.4, with \(\epsilon\) determined automatically based on sampled data. The subspace dimension \(d_{\text{PQ}}\) used for PQ is set to 384 for MSMARCOV21 and 256 for other datasets.  For point-level pruning, we keep half of the data points in memory. For the IO queue size, we set the maximum value to 32. By default, asynchronous IO is enabled for all methods. We use grid search to find the suitable parameters. For the subspace dimensions, we first set it to 80\% of the explained variance, and then we perform a grid search around this value.

Our C++ code was compiled on Ubuntu 24.04 using GCC 13.3.0 with the -O3 optimization flag. The tests were conducted on an AWS i7i.4xlarge instance. For the MSMARCOV21 dataset, we used an AWS i7i.8xlarge instance to perform the tests. We enabled AVX512 instructions to accelerate the distance computations, and all experiments were conducted in a single-threaded environment. For the asynchronous I/O part, we use an implementation based on io\_uring~\cite{iouringurl}. The SSD used in the experiments is the NVMe SSD provided with the instance.

\subsection{Overall Performance}
Figure ~\ref{fig:overall perf} illustrates the overall performance of our method compared to the baseline methods in different datasets. We have the following observations: (1) Our method outperforms HNSW in terms of latency while using less memory. In the MSMARCO dataset, our SkipDisk-PB method achieves 70\% of the latency of HNSW at a 99\% recall rate, using only about 22\% of the memory that HNSW requires. (2) Our SkipDisk-PB method achieves lower latency compared to other disk-based methods. For example, on the BIGCODE dataset, with similar memory usage, our method achieves 63\% of the latency of AlayaLaser at a 99\% recall rate. (3) Our low-memory variants, SkipDisk-PBC and SkipDisk-Est, also achieve lower latency with further reduced memory usage. For example, on the MSMARCO dataset, our SkipDisk-Est method achieves 82\% of the latency of HNSW at a 99\% recall rate, using only about 10\% of the memory required by HNSW. (4) The performance of AlayaLaser is inferior to that of HNSW mainly because we limit its memory footprint to the same level as our SkipDisk-PB method. This restriction significantly reduces its ability to reduce disk access, leading to poorer performance. (5) The SkipDisk-PBC method, which employs point-level pruning, further reduces memory footprint but results in slightly higher latency compared to the SkipDisk-PB method. (6) Compared to SkipDisk-Base, our SkipDisk-PB method achieves nearly the same latency, but with significantly lower memory usage. This indicates that our bit-level pruning technique can substantially reduce memory consumption without increasing latency.
\begin{table}
\caption{Memory Footprint (GB)}

    \label{tab:ourmemfootprint}
    \centering
    \begin{tabular}{c|c c c c }
    \hline
       Method  & ARGILLA & BIGCODE     & MSMARCO      & MSV21  \\
       \hline
       Base    & 38.08 & 18.91  &   20.76   &91.92  \\
       PB      & 22.72 & 11.33  &  12.17   &54.75   \\
       PBC     & 15.41 & 7.78  &   8.18   &38.18    \\
       Est     & 12.43 & 6.24  & 5.70   & 30.31    \\
       HNSW    & 89.73 & 34.12 & 54.95   & 213.13   \\
       PipeANN & 5.49 &  2.23  &   3.53    & 12.78  \\
       AlayaLaser & 20.51 & 11.67  & 13.29  & 69.72 \\
    \hline
    \end{tabular}  

\end{table}

\subsection{Memory footprint}
Table ~\ref{tab:ourmemfootprint} presents the memory usage of our method across different datasets together with compared methos. MSV21 is short for MSMARCOV21. By enabling bit-level pruning based on BF16 and dimension-level pruning based on PCA, our method significantly reduces memory consumption. Furthermore, the introduction of point-level pruning further decreases memory usage while maintaining high performance. The use of an estimation method further reduces memory requirements.

Compared to HNSW, our method demonstrates a significant advantage in memory usage. On the MSMARCO dataset, our method consumes only 10.38\% of the memory used by HNSW. For the MSMARCOV21 dataset, our method uses only 14.22\% of the memory required by HNSW. Consequently, our method can achieve comparable latency performance to HNSW with just 30.31GB of memory, as opposed to the more than 200GB required by HNSW.

\subsection{Points number in Point Level Pruning}
In this section, we present the latency of our SkipDisk-PBC variant under different numbers of points loaded into memory when point-level pruning is enabled. Loading more points into memory reduces the need for disk access but also increases memory usage. Figure ~\ref{fig:datalevel} illustrates the performance of our method when a fixed dimension-level subspace of 384 dimensions is used, along with BF16-based bit-level pruning, and varying numbers of points are loaded into memory. As more points are loaded into memory, less data needs to be accessed from the disk, which reduces latency.

\begin{figure}[ht]
\begin{subfigure}[t]{.22\textwidth}
  \centering
  \includegraphics[width=\textwidth]{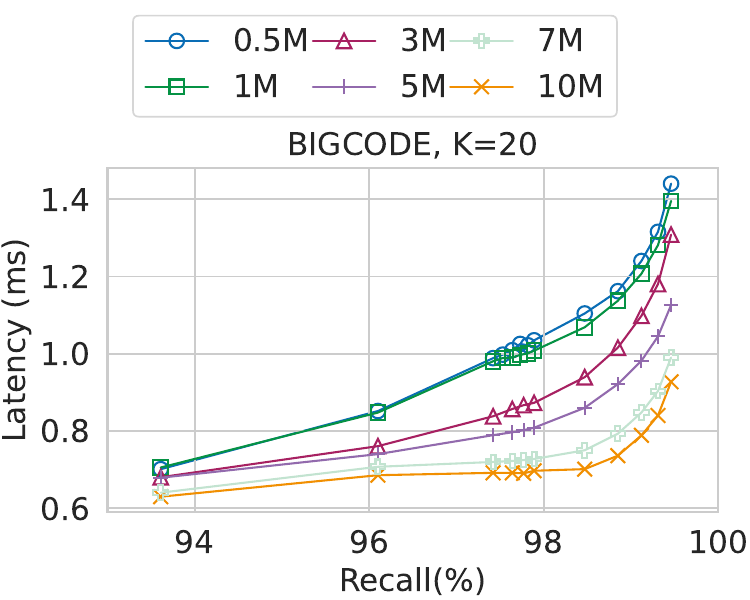}  
  \caption{Search latency}
  \label{fig:latencydatalevel}
\end{subfigure}
\begin{subfigure}[t]{.22\textwidth}
  \centering
  \includegraphics[width=\textwidth]{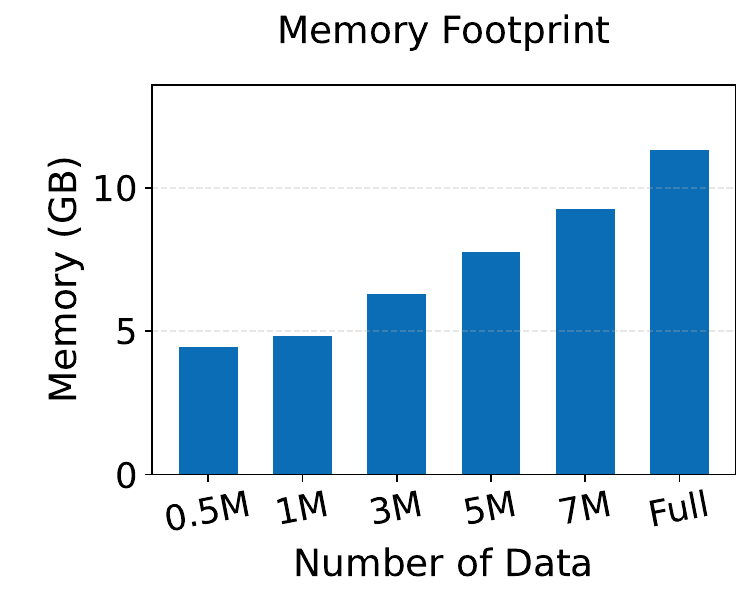} 
  
  \caption{Memory footprint}
  \label{fig:memdatalevel}
\end{subfigure}

\caption{Illustration of search latency and memory footprint on different data level pruning of SkipDisk-PBC.}

\label{fig:datalevel}
\Description{}
\end{figure}

\subsection{Latency under different Sub-dimensions}
The choice of different sub-dimensions significantly impacts search latency. Quantization and filtering can be performed on various sub-dimensions, and these choices affect the overall search performance. For quantization, if the data is overly reduced, it can lead to a loss of distance information, necessitating the traversal of more points to meet recall requirements, thereby increasing search latency. Conversely, if the sub-dimension is too large, the computational overhead increases. For lower bound based filtering, if the sub-dimension is too low, the filtering capability is weakened, leading to more disk accesses to meet recall requirements, which also increases search latency. We show the analysis results on SkipDisk-PB.

\begin{figure}[ht]
\begin{subfigure}[t]{.22\textwidth}
  \centering
  \includegraphics[width=\textwidth]{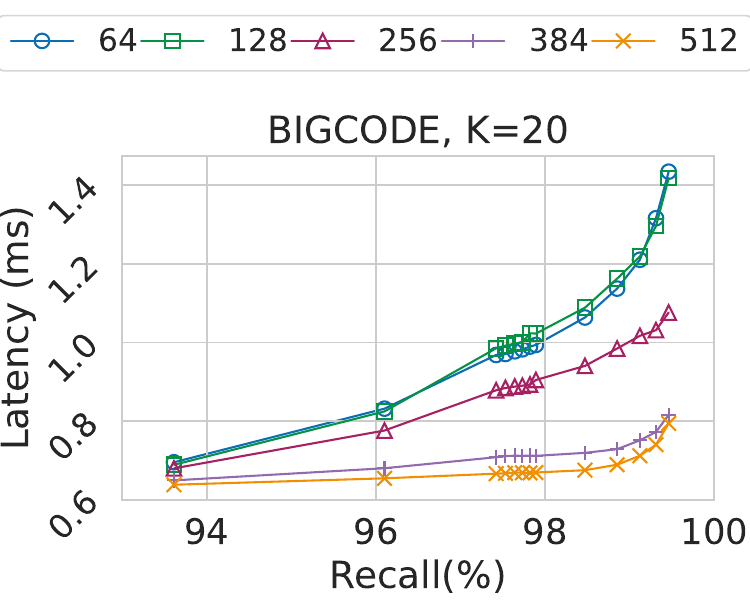}  
  \caption{Sub-dim for lower bound}
  \label{fig:latencysubdimlb}
\end{subfigure}
\begin{subfigure}[t]{.22\textwidth}
  \centering
  \includegraphics[width=\textwidth]{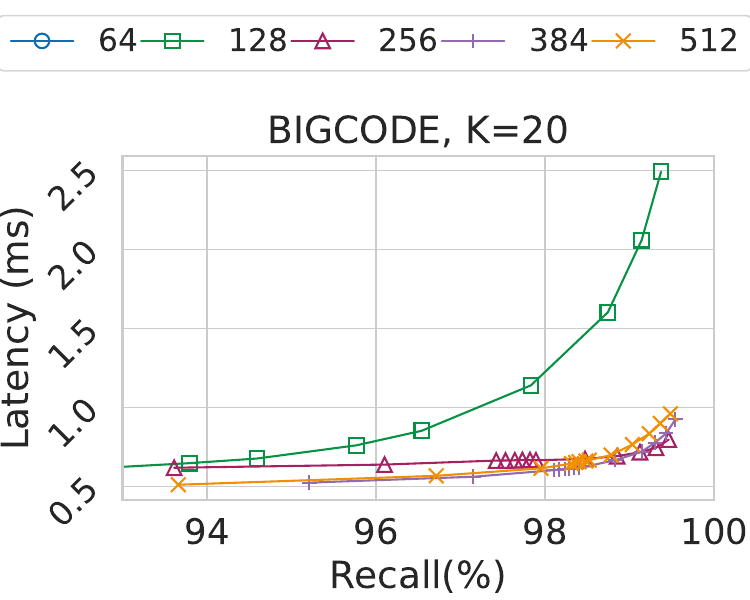} 
  
  \caption{Sub-dim for PQ}
  \label{fig:latencysubdimpq}
\end{subfigure}

\caption{Latency on different subdim of SkipDisk-PB.}

\label{fig:subdimonlbpq}
\Description{}

\end{figure}

Figure ~\ref{fig:latencysubdimlb} illustrates the search latency across four different sub-dimensions used for filtering. As the sub-dimension increases, the search latency decreases due to improved filtering performance, which reduces the number of disk accesses. However, beyond a certain point, further increasing the sub-dimension leads to diminishing returns in latency reduction, as the improvement in filtering performance becomes marginal. Figure \ref{fig:latencysubdimpq} illustrates the search latency across four different sub-dimensions used for product quantization (PQ), with the filtering dimension fixed. We observe that as the sub-dimensions increases, the search latency initially decreases, reaching an optimal point. Beyond this point, further increasing the sub-dimension leads to diminishing returns in latency reduction.

\subsection{I/O Reduction on Lower Bound Filtering}
In this section, we explore how our SkipDisk-PB variant reduces disk accesses under different sub-dimensions. Higher sub-dimensions lead to stronger filtering capabilities, thereby reducing the number of disk accesses, but they also result in increased memory usage. Figure~\ref{fig:subdimIOmemorlb} illustrates the changes in disk access and memory usage for different sub-dimensions in the BIGCODE dataset under 99\% recall rate. As the sub-dimension increases, there is a significant reduction in disk accesses, but this comes with a corresponding increase in memory usage. Therefore, in practical applications, it is essential to select an appropriate sub-dimension based on specific performance requirements and resource constraints to achieve optimal performance.

\begin{figure}[ht]
\begin{subfigure}[t]{.21\textwidth}
  \centering
  \includegraphics[width=\textwidth]{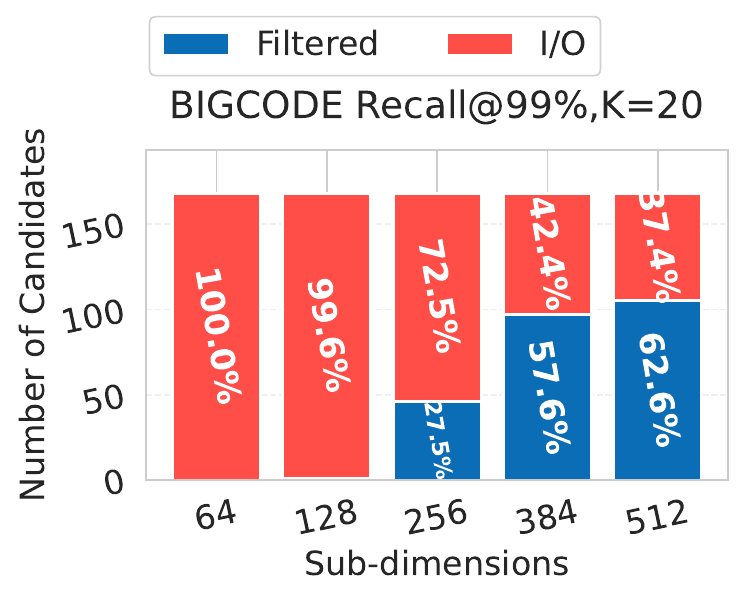}  

  \caption{I/O reduction}
  \label{fig:ioreductionsubdim}
\end{subfigure}
\begin{subfigure}[t]{.21\textwidth}
  \centering
  \includegraphics[width=\textwidth]{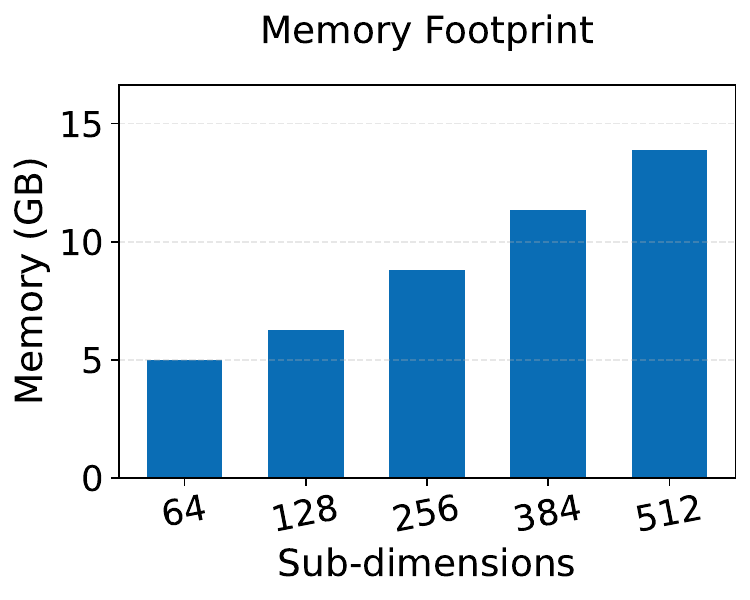} 

  \caption{Memory footprint}
  \label{fig:iomemsubdim}
\end{subfigure}

\caption{Disk I/O reduction and memory footprint on different dimension level pruning of SkipDisk-PB.}

\label{fig:subdimIOmemorlb}
\Description{}
\end{figure}

\subsection{Lower bound Tightness}
The tightness of the lower bound is crucial for latency, as it directly impacts the effectiveness of filtering. Here, we present the tightness of the lower bound computed by our method in Figure ~\ref{fig:lowerboudanalysis}. We compute the average lower bound values using BF16 for bit-level pruning as pivot across different datasets. We also include the average lower bound values from the Trim method for comparison. The results show that the lower bound values computed by our method are almost equivalent to the true distance values, while the lower bound values from the Trim method are significantly smaller than the true distance values. This indicates that our method provides a relatively tight lower bound to reduce disk accesses.

\begin{figure}
  \centering
  \includegraphics[width=8.5cm,keepaspectratio]{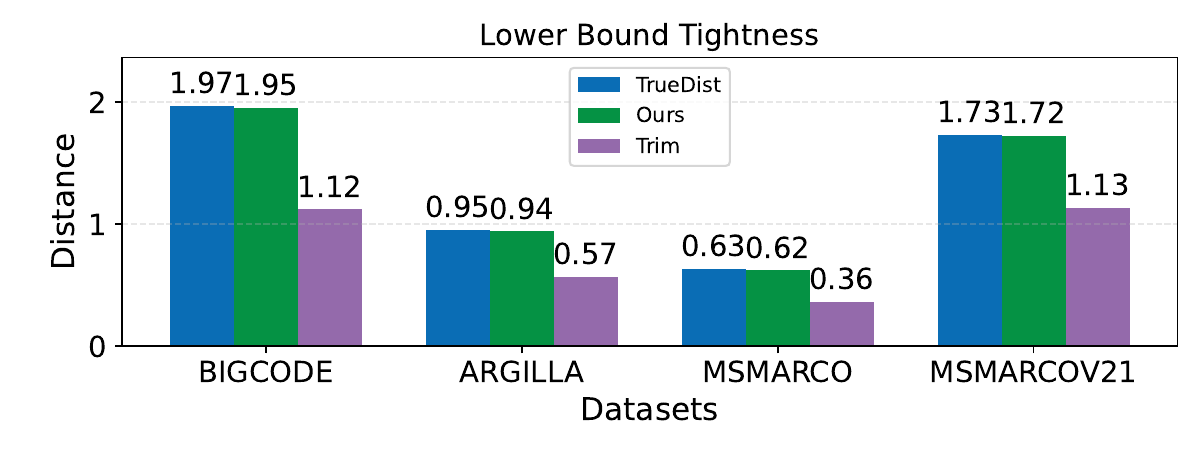}

  \caption{Tightness of lower bound.}
  \label{fig:lowerboudanalysis}

\end{figure}

\subsection{I/O Reduction on Estimation based Filtering}
The effectiveness of the estimation-based method in reducing disk accesses varies with different parameter settings. We present the reduction in disk accesses achieved by the estimation-based method under various parameter configurations in Figure ~\ref{fig:ioreductionest}. As the parameters increase, the filtering capability generally becomes stronger. However, larger parameters can also lead to the exclusion of some valid points, necessitating a more extensive search to meet recall requirements. For example, at a parameter value of 0.6, the result shows that a higher number of candidate points is needed to achieve a 99\% recall rate, even though the filtering capability is enhanced, resulting in a higher proportion of filtered points. In some parameter settings, the estimation-based method can significantly reduce disk accesses, while in others, the filtering performance may be less effective. Therefore, finding the optimal parameter settings is crucial for achieving the best filtering performance, which in turn enhances the overall search performance.
\begin{figure}
  \centering
  \includegraphics[width=6.5cm,keepaspectratio]{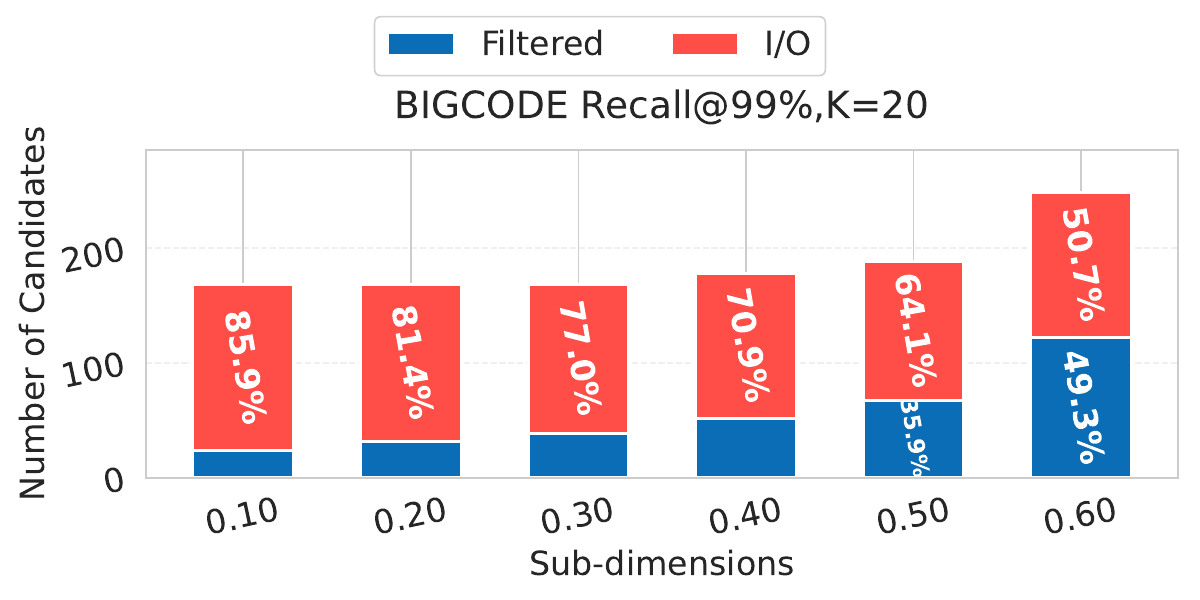}

  \caption{I/O reduction on estimation based filtering.}

  \label{fig:ioreductionest}
  \Description{Eight graph of different dataset to show our method's performance is better to compared methods.}
\end{figure}

\subsection{Performance on different top-\(K\)}
Figure \ref{fig:varytopk} illustrates the latency of our method under different top-\(K\) values. As the top-\(K\) value increases, the search becomes more challenging because more relevant items need to be returned to meet recall requirements, leading to more disk accesses and increased search latency. At top-\(K\)=100, our method still achieves lower latency than HNSW. However, as \(K\) increases to 500, our latency exceeds that of HNSW. Despite this, our memory usage is significantly lower than HNSW, and the increase in latency is relatively modest, with HNSW's latency being approximately 78\% of ours.
\begin{figure}[ht]
    \centering
    \includegraphics[width=7.5cm,keepaspectratio]{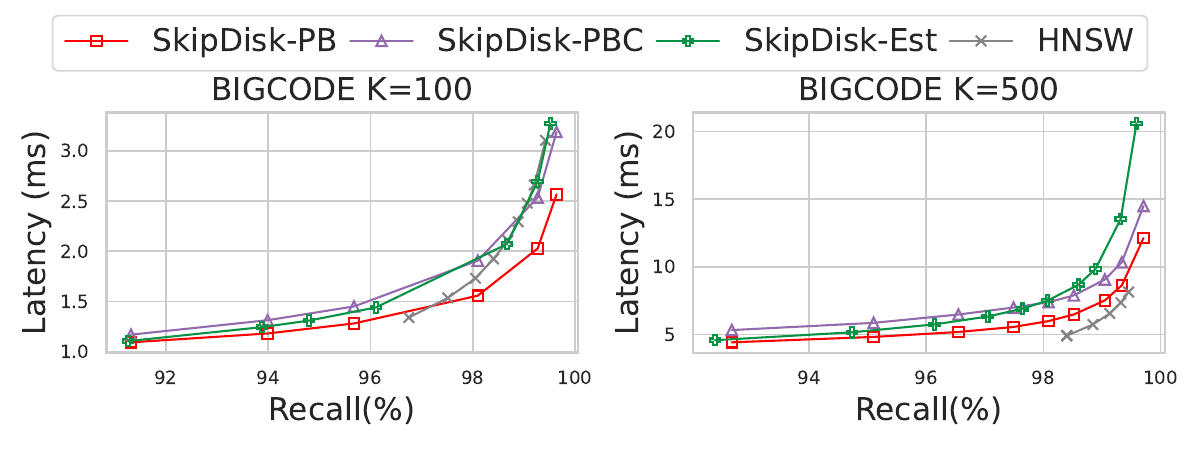}
    
    \caption{Latency on varying topK.}

    \label{fig:varytopk}
    \Description{}
\end{figure}

\subsection{Compared to PipeANN with Same Memory Budget}
PipeANN stores only quantized data in memory, resulting in relatively low memory consumption. In contrast, our method requires more memory. To compare the latency performance, we adjust PipeANN's memory usage to be approximately equivalent to that of our SkipDisk-PB variant by loading additional data into memory. The data loaded is selected based on a point-level pruning strategy. Figure ~\ref{fig:pipecache} presents the comparison results on the BIGCODE and ARGILLA datasets. The results show that, under the approximately same memory usage, our method significantly outperforms PipeANN in terms of latency performance on both datasets. This indicates that our method is more effective in utilizing memory resources and provides better overall performance. Furthermore, PipeANN achieves latency comparable to HNSW, demonstrating the excellence of PipeANN's design while also highlighting the achievements of our method.

\begin{figure}[ht]
    \centering
    \includegraphics[width=8.5cm,keepaspectratio]{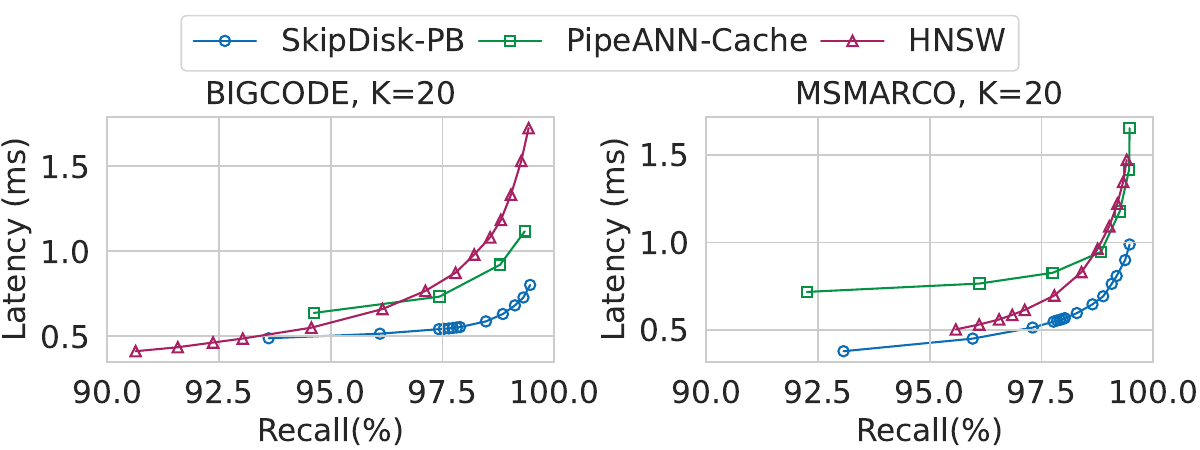}

    \caption{Latency of PipeANN with caching data points in memory.}

    \label{fig:pipecache}
    \Description{}
\end{figure}

\begin{figure}[ht]
    \centering
    \includegraphics[width=8.5cm,keepaspectratio]{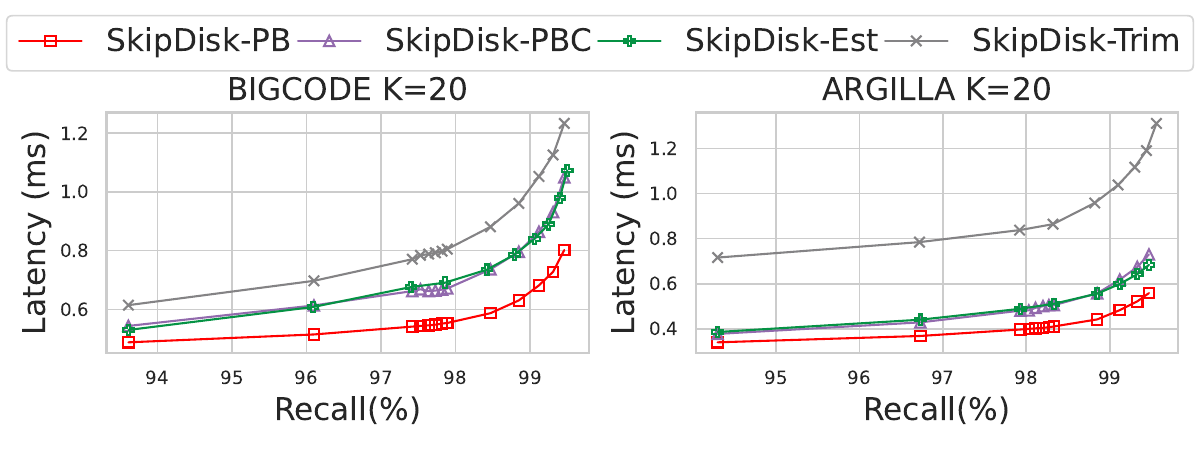}

    \caption{Latency of SkipDisk-Trim.}

    \label{fig:skipdisktrim}
    \Description{}

\end{figure}

\subsection{Extend with Trim}
Given that the Trim method can filter candidate points, we extend our approach to incorporate Trim's p-relax lower bound method for filtering and evaluate its performance. We compare this extension with our other variants on the bigcode and argilla datasets, setting the gamma parameter to 0.6 and 0.35, respectively. Figure ~\ref{fig:skipdisktrim} illustrates the performance comparison on these datasets. The results show that using our designed filtering method significantly outperforms the Trim method in both datasets. Specifically, after incorporating Trim, the memory usage on the BIGCODE and ARGILLA datasets is 5.90 GB and 12.93 GB, respectively, which is comparable to the memory footprint of our SkipDisk-Est method. However, the performance of the SkipDisk-Est method is notably superior. This comparison highlights the effectiveness of our triangle based filtering method in achieving better performance while maintaining a similar memory footprint.

\subsection{99.9th Percentile Latency}
In our previous evaluations, we focused primarily on the average latency performance. However, in practical applications, the  99.9th percentile latency performance is also a critical metric, as it reflects the system's behavior under extreme conditions. Figure ~\ref{fig:999latency} illustrates the 99.9th percentile latency performance of our SkipDisk-PB variant and compares it with that of HNSW. The results show that, even at the 99.9th percentile latency level, our method maintains comparable or lower latency performance than HNSW in BIGCODE dataset. This indicates that our SkipDisk-PB variant not only excels in average latency but also demonstrates robust performance under extreme conditions. These findings further validate the effectiveness and robustness of our approach.
\begin{figure}[ht]
    \centering
    \includegraphics[width=8.5cm,keepaspectratio]{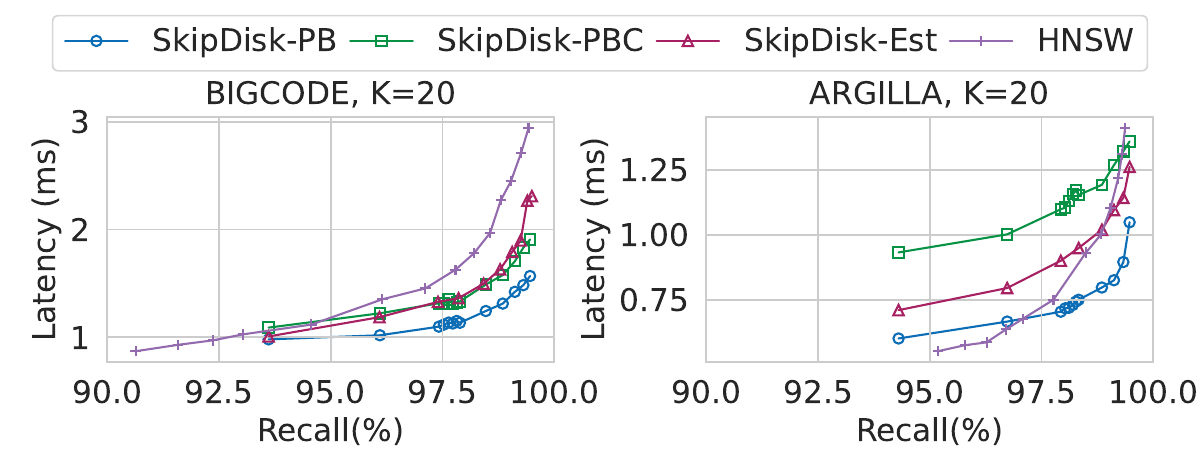}

    \caption{99.9th Percentile Latency on SkipDisk and HNSW.}
    \label{fig:999latency}

\end{figure}

\begin{figure}[ht]
    \centering
    \includegraphics[width=8.5cm,keepaspectratio]{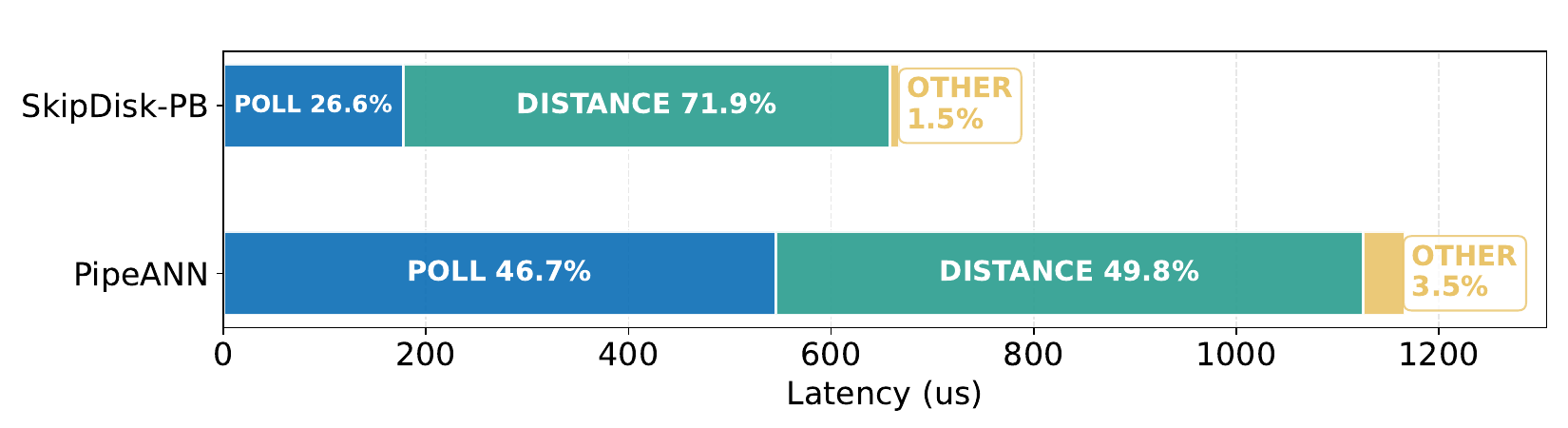}

    \caption{Time distribution of PipeANN and SkipDisk-PB an Recall@99 on K=20 in BIGCODE dataset.}

    \label{fig:timeanalysis}
    \Description{}
\end{figure}

\subsection{Latency Breakdown for SkipDisk-PB}
We perform a detailed latency breakdown analysis for the SkipDisk-PB variant to better understand the performance changes. Figure ~\ref{fig:timeanalysis} presents the results on the BIGCODE dataset, compared with PipeANN. We observe that previously, a significant portion of the overhead in PipeANN was due to polling, which involves continuously checking for I/O completion. After our optimizations, this portion has been significantly reduced as we reduce the number of I/O operations, with most of the I/O operations can be hidden within the candidate generation process.
\section{Related Work}
\textbf{In-Memory Graph-Based ANN methods.} Classic methods for ANN search include tree-based~\cite{DBLP:conf/icml/BeygelzimerKL06,DBLP:journals/cacm/Bentley75,DBLP:conf/vldb/CiacciaPZ97,DBLP:conf/cvpr/Silpa-AnanH08}, hashing-based~\cite{andoni2008near,andoni2015optimal,DBLP:conf/sigmod/GanFFN12,DBLP:journals/pvldb/HuangFZFN15,DBLP:conf/compgeom/DatarIIM04,DBLP:conf/sigmod/TaoYSK09,DBLP:journals/pvldb/SunWQZL14,DBLP:journals/vldb/ZhengZWNLJ22,DBLP:conf/sigmod/ZhengGTW16,DBLP:conf/iclr/DongIRW20,DBLP:journals/pieee/WangLKC16}, quantization-based~\cite{DBLP:journals/pvldb/AndreKS15,DBLP:conf/cvpr/GeHK013,DBLP:conf/cvpr/GeHK013,DBLP:conf/icassp/JegouTDA11,DBLP:conf/cvpr/KalantidisA14,10.1145/3654970}, and graph-based methods~\cite{DBLP:conf/www/DongCL11,fu2021high,DBLP:journals/pvldb/FuXWC19,DBLP:journals/pvldb/LuKXI21,malkov2018efficient,DBLP:conf/ijcai/HajebiASZ11,DBLP:conf/mm/WangL12,10.1145/3709693,DBLP:journals/pvldb/WangXY021}. Graph-based methods have gained significant attention due to their efficient search performance, with HNSW~\cite{malkov2018efficient} being a widely supported in vector databases\cite{DBLP:conf/sigmod/WangYGJXLWGLXYY21,DBLP:journals/pvldb/GuoLXYYLCXLLCQW22,DBLP:journals/pvldb/ChenJZPWWHSWW24,DBLP:journals/vldb/PanWL24}. Many researchers have optimized graph-based methods from various perspectives, such as edge connection strategies, to improve performance. Examples include FANNG~\cite{DBLP:conf/cvpr/HarwoodD16}, NSG~\cite{DBLP:journals/pvldb/FuXWC19}, Vamana~\cite{jayaram2019diskann}, \(\tau\)-MG~\cite{DBLP:journals/pacmmod/PengCCYX23}, \(\alpha\)-CNG~\cite{DBLP:journals/corr/abs-2510-05975}, and LMG~\cite{DBLP:journals/tods/XieYL25}. Our SkipDisk method directly uses the Vamana graph constructed by DiskANN for the in-memory part. In the future, we can explore the integration of other graph-based methods to further enhance in-memory search performance.

\textbf{Disk-Based ANN Search.} Disk-based ANN search has gained attention for handling large datasets. DiskANN~\cite{jayaram2019diskann} is a representative method. SPANN~\cite{DBLP:conf/nips/ChenZWLLLYW21}, an inverted index-based method, is easy to maintain~\cite{DBLP:conf/sosp/XuLLXCZLYYYCY23} but suffers from high latency due to the large amount of disk data access required~\cite{DBLP:journals/corr/abs-2602-23342}. Starling~\cite{DBLP:journals/pacmmod/WangXYWPKGXGX24} and PageANN~\cite{DBLP:journals/corr/abs-2509-25487} face limitations with higher-dimensional vectors due to disk page constraints. PipeANN~\cite{DBLP:conf/osdi/GuoL25} employs a asynchronous I/O and poll based pipeline to optimize disk access. OrchANN~\cite{DBLP:journals/corr/abs-2512-22838} employs a route–access–verify pipeline to optimize disk performance. And the center of the cluster for the pivot in the inferior bound for the lower bound as shown in Trim~\cite{DBLP:journals/pacmmod/SongZGYWWQ25}. VeloANN~\cite{DBLP:journals/corr/abs-2602-22805} targets improving system throughput by considering multiple query coroutines to progress cooperatively, while our method focuses on optimizing single-query latency. Trim~\cite{DBLP:journals/pacmmod/SongZGYWWQ25} reduce the disk access number using a lower bound based on the triangle inequality, but the loose bound may result in a limited filtering effect in some cases‌. Gorgeous~\cite{yin2025gorgeous}  reduces disk access by only examining the top-\(D_r\) candidates sorted by distances in production quantization data. Fusion-ANNS~\cite{DBLP:conf/fast/TianLTXDL0ZZ025} and GUSTANN~\cite{DBLP:journals/pacmmod/JiangGXSL25} leverages CPU-GPU co-design to improve throughput. AlayaLaser~\cite{DBLP:journals/corr/abs-2602-23342} optimizes data layout for computational efficiency and caches large amounts of data in memory, achieving good performance but with a high memory footprint. Gorgeous~\cite{DBLP:journals/corr/abs-2508-15290} redesigns the data layout to achieve better performance than Starling. Our SkipDisk is the first method to achieve better single-query latency than HNSW while using significantly less memory. In the future, we can explore the integration of other disk-based methods to further improve overall performance.

\textbf{Distance Computation Optimization.} Distance computation is a critical component of ANN search, and optimizing it can significantly enhance search performance~\cite{DBLP:journals/pacmmod/SongWY25,DBLP:journals/pacmmod/GaoL23}. Quantization-based approaches, such as RabitQ~\cite{10.1145/3654970}, SAQ~\cite{DBLP:journals/pacmmod/LiDYZC25}, MRQ~\cite{yang2024quantization} have been proposed to compress the size of the data and improve the distance computation. DCO (Distance Comparison Operation) based methods, such as ADSampling~\cite{DBLP:journals/pacmmod/GaoL23}, FINGER~\cite{DBLP:conf/www/ChenCJYDH23}, DADE~\cite{DBLP:journals/pvldb/DengCZWZZ24}, PEOs~\cite{DBLP:conf/icml/Lu0I24}, DDC~\cite{DBLP:conf/icde/YangLJZWSJW25}, and PDX~\cite{10.1145/3725333}, have been developed to optimize distance computation. We incorporate DADE to design an estimation-based filtering method, and we can explore integrating other distance computation optimization methods in future for further improvement.

\section{Conclusion}
In this paper, we proposed SkipDisk, a novel SSD-memory hybrid ANN search method. We assign dedicated pivots to each point to compute a tighter lower bound, reducing disk accesses. And we design an estimation-based filtering method based on this lower bound to enable effective filtering with reduced memory footprint. Second, we designed dimension-level pruning, bit-level pruning, and point-level pruning strategies to reduce memory usage while maintaining low latency. Third, we decoupled in-memory candidate generation from disk access by storing neighbor nodes in memory and used asynchronous I/O to hide disk access latency. Through these optimizations, we achieved lower latency performance than HNSW while significantly reducing memory footprint.


\bibliographystyle{ACM-Reference-Format}
\bibliography{sample}

\end{document}